\documentclass[letterpaper, 10 pt, conference]{ieeeconf}
\IEEEoverridecommandlockouts
\overrideIEEEmargins

\usepackage[colorlinks=False,allcolors=blue]{hyperref}
\usepackage[margin=1in]{geometry}
\usepackage{subcaption}
\usepackage{booktabs}
\usepackage{zref-user}
\usepackage{footmisc}

\usepackage{wrapfig}

\usepackage{graphicx}
\usepackage{amsmath}
\usepackage{amssymb}
\usepackage{bm}
\usepackage{mathtools}

\def\namedlabel#1#2{\begingroup
    #2% 
    \def\@currentlabel{#2}% 
    \phantomsection\label{#1}\endgroup
}
\makeatletter
\newcommand{\leqnomode}{\tagsleft@true\let\veqno\@@leqno}
\makeatother
\newcommand\PMPODE[1]{\hyperref[PMPODE]{$\textbf{ODE}_{#1}$}\xspace}

\newcommand\PMPODEg[1]{\hyperref[PMPODEg]{$\textbf{ODE}^g_{#1}$}\xspace}
\newcommand\OCPg[1]{\hyperref[OCPg]{$\textbf{OCP}^g_{#1}$}\xspace}

\newcommand\OCP[1]{\hyperref[OCP]{$\textbf{OCP}_{#1}$}\xspace}
\newcommand\BVP[1]{\hyperref[BVP]{$\textbf{BVP}_{#1}$}\xspace}

\usepackage{tabularx}
\newcolumntype{C}{>{\centering\arraybackslash}p{19mm}}
\newcolumntype{G}{>{\centering\arraybackslash}p{4mm}}
\newcolumntype{S}{>{\centering\arraybackslash\scriptsize}p{4mm}}

\usepackage[font=small,labelfont=bf]{caption}
\usepackage{color}
\usepackage[usenames,dvipsnames]{xcolor}

\usepackage{cleveref} % 

\usepackage{xspace}

\usepackage{enumitem} % 

\usepackage{amssymb}% 
\usepackage{algorithm}
\captionsetup[algorithm]{labelsep=period} % 
\makeatletter
\renewcommand*{\ALG@name}{Alg.}
\makeatother
\usepackage[noend]{algpseudocode}

\usepackage{xpatch} % 

\newtheorem{definition}{Definition}

\newtheorem{assumption}{Assumption}
\newtheorem{lemma}{Lemma}
\newtheorem{theorem}{Theorem}
\newtheorem{corollary}{Corollary}
\newtheorem{proposition}{Proposition}

\newcommand\mydots{\hbox to 1em{.\hss.\hss.}}

% important sets
\providecommand{\R}{\ensuremath \mathbb{R}}

% RNROMPC notation

 % Regressed Wz
 % true Wz
 % true Wz
 % true reduced nonlinear term

% useful text modifications

\newcommand{\ie}{\textit{i.e., }}
\newcommand{\eg}{\textit{e.g., }}

% common operations

\newcommand{\norm}[1]{\left\Vert#1\right\Vert}

\newcommand{\bigO}{\mathcal{O}}

\newcommand{\defeq}{\vcentcolon=}

% math types
% \newcommand{\vc}[1]{\mathbf{#1}}
% \newcommand{\mat}[1]{\mathbf{#1}}

% generic math objects

\newcommand{\E}{\mathbf{E}} %Spectral subspace

\DeclareMathOperator*{\argmin}{arg\,min}

% labels

% special arrays

\newcommand{\eye}{{\mathbf{I}}}
\newcommand{\Vpca}{\mathbf{V}_{\text{E}}}
\newcommand{\Vopt}{\mathbf{V}_{\text{opt}}}

% indexing
% \newcommand{\idx}[1]{{\langle#1\rangle}} % index an element of a set
 % index an element of a set
% \newcommand{\arridx}[1]{{[#1]}} % index elements of an array/vector

% paper-relevant objects

\newcommand{\fstate}{\mathbf{x}}

\newcommand{\rstate}{\mathbf{x}_{r}}
\newcommand{\rstatebar}{\bar{\mathbf{x}}_{r}}
\newcommand{\nstate}{\mathbf{x}_{n}}
\newcommand{\nstatebar}{\bar{\mathbf{x}}_{n}}

\newcommand{\obs}{\mathbf{y}}

\newcommand{\manifold}{\mathcal{M}}

\newcommand{\fode}{\mathbf{f}}

\newcommand{\fiber}{f^s}
\newcommand{\foliation}[1]{\mathcal{W}_{\mathrm{loc}}^s(#1)}
\newcommand{\localfoliation}[1]{\mathcal{W}_{\mathrm{loc}}^s(#1)}

\newcommand{\wmap}{\mathbf{w}} % ssm paramaterization, w
\newcommand{\parammap}{\bm{\omega}} % ssm paramaterization, w
\newcommand{\vmap}{\bm{\nu}} % ssm chart, v

\newcommand{\chart}{\bm{\sigma}}
\newcommand{\normalchart}{\bm{\tau}}

\newcommand{\normalbundleeps}{\mathbf{N}^{s, \varepsilon}}
\newcommand{\Vconst}{\mathbf{V}_{0}}
\newcommand{\Vlin}{\mathbf{V}_{1}}
\newcommand{\graphfunc}{\mathbf{g}}

% Notation for our bounds

\newcommand{\p}{\mathbf{p}}

 % Regressed Bwz
 % truncated Bwz
 % true Bwz

 % truncated W
 % analytic W

 %Truncated R
 % C^a rom function

 % Regressed Br
 % Regressed Br

\newcommand{\ctrl}{\mathbf{u}}

\newcommand{\spectralsubspace}{E}

\newcommand{\zero}{\mathbf{0}}

\newcommand{\flow}{\mathbf{F}^t}

\newcommand{\tubular}{\mathcal{T}^\varepsilon}

% % sizes of things
\newcommand{\nfstate}{{n_{f}}}
\newcommand{\nrstate}{n}
\newcommand{\nctrl}{m}

\newcommand{\nsmooth}{r}

% for quasi-periodic setting change to:
% \newcommand{\nfreq}{K}
% \newcommand{\phasedom}{\Torus^\nfreq}
% \newcommand{\freqdom}{\R_+^\nfreq}

\newcommand{\jac}{\mathbf{D}}

\newcommand{\pp}{\mathbf{p}}
\newcommand{\qq}{\mathbf{x}}
\newcommand{\vv}{\mathbf{v}}

\newtheorem{preremark3}{Theorem}[section]

\usepackage{mdframed}
\usepackage{lipsum}
\newmdtheoremenv{theo}{Theorem}

% Thick plus sign
\newcommand{\fplus}[1][black]{%
  \begingroup\leavevmode\color{#1}%
  \setlength{\unitlength}{0.6em}%
  \linethickness{.15em}%
  \begin{picture}(1,1)
  \put(0,0.5){\line(1,0){1}}
  \put(0.5,0){\line(0,1){1}}
  \end{picture}%
  \endgroup
}

\usepackage{multirow}

\usepackage[backend=biber,style=ieee,url=false,doi=false,eprint=false, date=year]{biblatex}
\addbibresource{main.bib}
\addbibresource{ASL_papers.bib}

\AtEveryBibitem{\clearfield{pages}}
\AtEveryBibitem{\clearfield{volume}}
\AtEveryBibitem{\clearfield{number}}

% Turning on extended version
\usepackage{etoolbox}
\newtoggle{ext}
\usepackage{hyperref}
% \newcolumntype{?}{!{\vrule width 2pt}}
\toggletrue{ext}  % turn on extended version
% \togglefalse{ext}

\title{\LARGE \bf
Taming High-Dimensional Dynamics: \\ Learning Optimal Projections onto Spectral Submanifolds 
\vspace{-4mm}
}

\author{Hugo Buurmeijer$^{1}$, Luis Pabon$^{1}$, John Irvin Alora$^{1}$, Roshan S. Kaundinya$^{2}$, \\ George Haller$^{2}$, Marco Pavone$^{1,3}$% 
\thanks{{\footnotesize$^{1}$Department of Aeronautics \& Astronautics, Stanford University.
        {\tt\scriptsize \{hbuurmei,lpabon,jjalora,pavone\}@stanford.edu}}}% 
\thanks{{\footnotesize$^{2}$Institute of Mechanical Systems, ETH Z\"urich. \phantom{abcdefghijklmn}  \hbox{{\tt\scriptsize \{sroshan,georgehaller\}@ethz.ch}}}}% 
\thanks{{\footnotesize$^{3}$NVIDIA Research. \phantom{abcdefghijklmn}}}% 
\vspace{-4mm}
}

\begin{document}

\maketitle
\thispagestyle{empty}
\pagestyle{empty}

\begin{abstract}
High-dimensional nonlinear systems pose considerable challenges for modeling and control across many domains, from fluid mechanics to advanced robotics.
Such systems are typically approximated with reduced-order models, which often rely on orthogonal projections, a simplification that may lead to large prediction errors.
In this work, we derive optimality of fiber-aligned projections onto spectral submanifolds, preserving the nonlinear geometric structure and minimizing long-term prediction error.
We propose a data-driven procedure to learn these projections from trajectories and demonstrate its effectiveness through a 180-dimensional robotic system. Our reduced-order models achieve up to fivefold improvement in trajectory tracking accuracy under model predictive control compared to the state of the art.
\end{abstract}

\section{Introduction}\label{sec:introduction}
We address the challenge of modeling and controlling high-dimensional nonlinear systems, prevalent in fields like fluid mechanics and robotics, where computational complexity necessitates reduced-order models (ROMs). Traditional ROMs, typically based on orthogonal projections like Proper Orthogonal Decomposition (POD) \cite{berkooz2003proper}, often fail to accurately capture nonlinear dynamics critical for reliable long-term prediction and control \cite{cenedese2022data-driven}.

Reducing the dynamics to Spectral Submanifolds (SSMs) offers a rigorous alternative for capturing dominant slow dynamics in nonlinear systems~\cite{haller2016nonlinear}. In general, the system's trajectory will not lie exactly on an SSM, requiring a mapping from off-manifold states onto on-manifold states.
Prior data-driven SSM methods \cite{alora2025discovering, AloraCenedeseEtAl2023, kaundinya2025data-driven, alora2023robust} carry out this mapping via orthogonal projections onto the tangent space of the SSM.
While often sufficient, such orthogonal projections neglect the system’s stable fiber geometry, leading to inaccuracies in capturing transient behaviors and thus limiting ROM performance in predictive control applications.
\begin{figure}[t]
\centering
    \vspace{0.2cm}
    {\includegraphics[width=0.85\columnwidth]{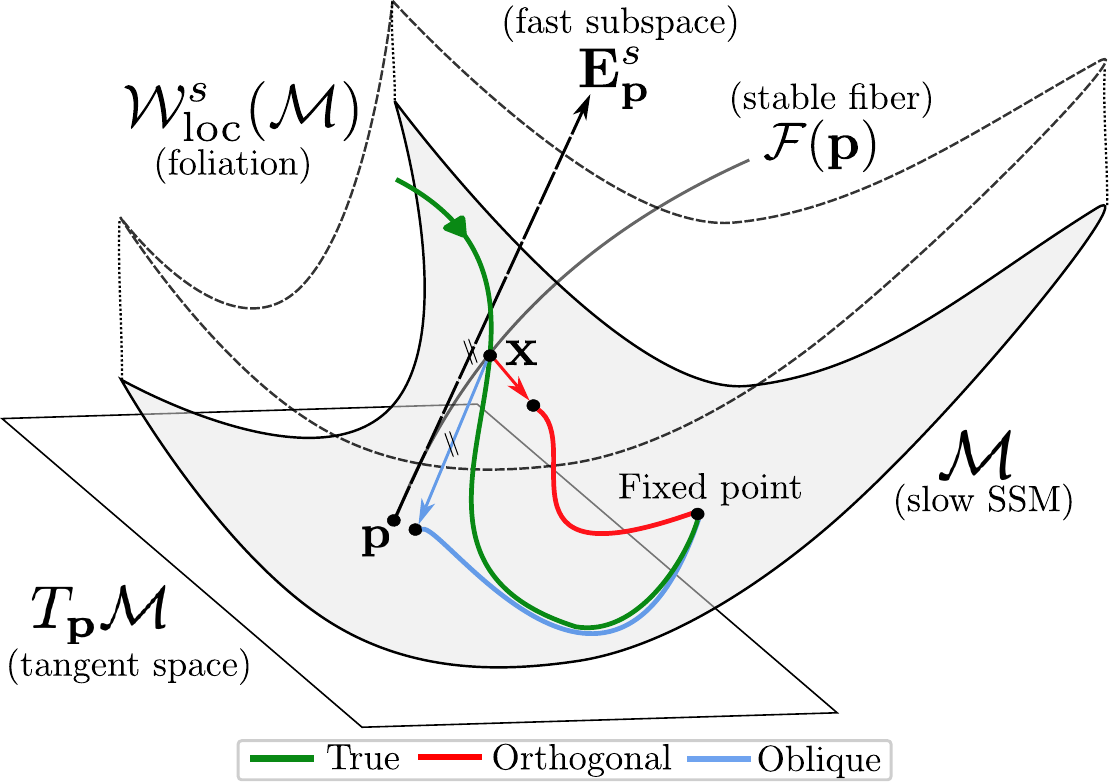}}
    \caption{\textbf{Invariant Foliation of a slow SSM}.
    A slow SSM, $\manifold$, is foliated by stable fibers forming a local stable foliation, $\localfoliation{\manifold}$. 
    The green trajectory represents the evolution of the system from an initial condition off of the SSM. We show how oblique projections (blue) improve prediction accuracy over projecting orthogonally (red) onto the SSM.
    } \label{fig:foliation}
    \vspace{-5mm}
\end{figure}

Motivated by this limitation, we specifically tackle reduced-order modeling for high-dimensional nonlinear control systems of the form
\begin{equation} \label{eq:ODE_intro} \dot{\fstate}(t)=\fode(\fstate(t))+\mathbf{B}(\fstate)\ctrl(t), 
\end{equation}
where $\fstate\in\R^{\nfstate}$ is the state, $\ctrl\in\R^{\nctrl}$ is the control input, and $\fode(\fstate)$ captures the uncontrolled dynamics.
Our aim is to build reduced-order models on a slow SSM and determine an optimal projector that maps trajectories near the SSM onto trajectories on it, ensuring that off-manifold trajectories converge exponentially fast to their on-manifold counterparts. 
We demonstrate that these projections align with the system's stable fibers, as illustrated in Figure \ref{fig:foliation}.

To realize these fiber-aligned projections, we introduce a novel data-driven approach for learning optimal oblique projections onto the SSM. 
Our contributions are:
\begin{enumerate}
    \item A geometric characterization of neighborhoods around slow SSMs, demonstrating that projections aligned with stable fibers are optimal and can be approximated by oblique projections.
    \item A data-driven technique to learn these projections and the reduced dynamics.
    \item Numerical validation demonstrating how these projections improve prediction and control accuracy in slow-fast systems and high-dimensional robotics.
\end{enumerate}

\noindent 
\textbf{Outline}: % 
Section~\ref{sec:related_work} reviews related work. Section~\ref{sec:structure} formally defines slow SSMs and their stable foliations.
Section~\ref{sec:projection_operator} establishes the optimality of fiber-aligned projections and introduces a data-driven approach to learning the projection operator.
Section~\ref{sec:numerical} applies SSMs to control and provides numerical validations and comparisons, with conclusions in Section~\ref{sec:conclusion}.
For brevity, we refer the reader to proofs in \iftoggle{ext}{\Cref{app:proof_flow_estimates,app:proof_foliation_rate,app:proof_wmap_lipschitz,app:proof_projection_operator}.}{the appendix of the extended version: {\small \url{https://arxiv.org/abs/2504.03157v2}.}}

\section{Related Work}\label{sec:related_work}
Linear projection-based methods, such as POD, often utilize oblique projections (\ie Petrov-Galerkin) to address the challenges of reducing systems which exhibit significant nonnormality, and transient growth (see \cite{benner_reviewPOD_2015} for a comprehensive review). More rigorous alternatives like SSM-based model reduction also utilize oblique projections, but unlike the data-driven projections in POD, these are inferred directly from a modal analysis of the slow-fast splitting of the governing equations \cite{jain_2022_SSMTool}.

While data-driven SSMs have shown state-of-the-art performance in various applications spanning fluids and soft robot control \cite{cenedese2022data-driven, alora2025discovering, kaundinya2025data-driven} the fast and slow subspaces of the systems in these applications are often non-orthogonal at the fixed point from which the SSM emanates \cite{bettini_oblique_2025}. To tackle this for 2D SSMs, Bettini et al.~\cite{bettini_oblique_2025} optimize an oblique projection using backbone data from mechanical vibration experiments. This local approximation of the fast stable invariant foliation is a more efficient alternative to learning global, nonlinear fiber approximations \cite{szalai2023_ISF}, since the latter demands significant amounts of data.

Alternatively, autoencoder-based reduction strategies project dynamics onto slow manifolds, determining their dimension by minimizing training error \cite{lee_2020_autoencoder, Champion_2019_autoencoder, fresca_2021_autoencoder}. These methods, however, often project orthogonally onto the target structure, introducing errors. Otto et al.~\cite{otto2023learning} refine this by incorporating nonlinear oblique projections, learning an optimal encoder to minimize the gap between projected and reduced dynamics. While effective, this approach requires knowledge of the governing equations and is data-intensive, also rendering it impractical for hardware implementations.

Inspired by the effectiveness of optimal linear oblique projections in experimental settings~\cite{bettini_oblique_2025} and the broad generality of nonlinear projections~\cite{otto2023learning}, we propose a method that learns an optimal linear oblique projection onto SSMs directly from data. This approach strikes a valuable trade-off, blending the practicality of \cite{bettini_oblique_2025} with the versatility of \cite{otto2023learning}, making it well-suited for diverse applications, including high-dimensional systems like soft robots. We further extend this method to nonlinear systems using more recent results on Normally Attracting Invariant Manifolds (NAIMs)~\cite{eldering2013normally, eldering2018global}.

\section{Structure of a Slow SSM}\label{sec:structure}
Our analysis of slow SSMs relies on two assumptions on the nonlinear system~\eqref{eq:ODE_intro}.

\begin{assumption}\label{assumption:f}
$\fode$ is continuously differentiable ($\fode \in \scalebox{0.95}{$C^1$}$), globally Lipschitz, and its Jacobian \scalebox{0.95}{$\jac \fode$} is Lipschitz.
\end{assumption}

\begin{assumption}\label{assumption:stability}
The uncontrolled system, $\fode$, has a \textit{single}\footnote{The analysis extends to systems with multiple equilibria, but remains local around a chosen equilibrium.} stable equilibrium point at the origin, \ie $\fode(\zero) = \zero$ and all eigenvalues of the Jacobian $\mathbf{A} \defeq \jac \fode(\zero)$ have negative real parts.
\end{assumption}

Assumption \ref{assumption:f} is a standard smoothness assumption \cite{Lorenz2005,Cannarsa2006} that guarantees the existence and uniqueness of solutions to the ODEs in \eqref{eq:ODE_intro}. By multiplying $\fode$ with a smooth cutoff function whose arbitrarily large support contains states of interest, the Lipschitz continuity assumptions are always satisfied if $\fode \in C^2$.

Assumption \ref{assumption:stability} is well-motivated as most engineering systems are stable at some equilibrium configuration. Although these systems naturally return to equilibrium when unforced, we require control to make them follow desired trajectories away from this stable point.

\subsection{Slow SSM}
The dominant dynamics of \Cref{eq:ODE_intro} are those that capture long-term behavior of the system. To study these persistent dynamics, we invoke the concept of a slow SSM. Broadly speaking, slow SSMs are low-dimensional attracting invariant manifolds that are tangent to slow eigenspaces at a stable hyperbolic fixed point. As such, SSMs capture the slow dynamics of the system, acting as the stable core onto which the fast, transient dynamics synchronize. Below, we expand on this in more detail.

We define an $\nrstate$-dimensional spectral subspace $\spectralsubspace$ as the direct sum of an arbitrary collection of $\nrstate$ real eigenspaces of $\mathbf{A}$:
\begin{align}
    E \defeq  E_{j_1} \oplus E_{j_2} \oplus ... \oplus E_{j_\nrstate},
\end{align}
where each $\spectralsubspace_{j_k}$ is the real eigenspace associated with the eigenvalue $\lambda_{j_k}$ of $\mathbf{A}$. Let $\Lambda_{\spectralsubspace}$ be the set of eigenvalues corresponding to $\spectralsubspace$, and $\Lambda_{\text{out}}$ be the complementary set of eigenvalues. If the spectral gap condition $$\min_{\lambda\in\Lambda_{\spectralsubspace}} \text{Re}(\lambda)>\max_{\lambda\in\Lambda_{\text{out}}} \text{Re}(\lambda)$$
holds, then $\spectralsubspace$ represents the slowest spectral subspace of order $\nrstate$.
Intuitively, this subspace captures the dominant modes responsible for the persisting dynamics of the system.
\begin{definition}[Slow SSM] \label{defn:ssm}
Let $\flow:\R^{\nfstate}\times\R^+\to\R^{\nfstate}$ denote the flow generated by the uncontrolled dynamics $\fode$. A manifold $\manifold \subset \R^{\nfstate}$ is a slow Spectral Submanifold (SSM) associated with the spectral subspace $\spectralsubspace$ if it satisfies the following conditions:
\begin{enumerate}[label=(\roman*)]
    \item \textbf{Invariance}: $\manifold$ is invariant under the flow $\flow(\cdot)$ \ie $\flow(\manifold) = \manifold$, for all $t \geq 0$.
    \item \textbf{No Center Manifolds}: At each point $\mathbf{p}\in \mathcal{M}$, there is a direct-sum decomposition
\[
T_{\mathbf{p}} \bigl(\mathbb{R}^{n_f}\bigr) \;=\; T_{\mathbf{p}}(\mathcal{M}) \;\oplus\; \E_{\pp}^s,
\]
where $\E_{\pp}^s$ is the $(n_f-n)$-dimensional subspace of transverse directions that converge to $\mathcal{M}$ and $T_{\mathbf{p}}$ denotes the tangent space at $\mathbf{p}$.
    \item \textbf{Uniform Attraction}: There exist constants $C>0$, and rates $\lambda>\mu>0$, such that for all $\pp \in \manifold$ and $t > 0$:
    \begin{equation}\label{eq:uniform_attraction}
    \begin{aligned}
        \norm{\jac \flow(\p) \bm{u}} &\leq C e^{-\lambda t} \norm{\bm{u}}, \> \bm{u} \in \E_\pp^s, \\
        \norm{\jac \flow(\pp)\bm{v}} &\leq C e^{-\mu t} \norm{\bm{v}}, \> \bm{v} \in T \manifold.
    \end{aligned}
    \end{equation}
    \item \textbf{Tangency}: $\manifold$ is tangent to $E$ at the origin and has the same dimension as $E$.
    \item \textbf{Unique Smoothest}: $\manifold$ is strictly smoother than any other invariant manifold.
\end{enumerate}
\end{definition}
This definition is a restatement of the slow SSM definition in \cite{haller2016nonlinear}, combined with the fact that a slow stable SSM with a single fixed point is a uniform NAIM \cite{eldering2013normally, hirsch1970invariant}.
The low-dimensionality of 
$\manifold$ manifests due to the uniform attraction condition \eqref{eq:uniform_attraction}, which ensures that the contraction and expansion rates normal to 
$\manifold$ dominate those within it. As a result, the long-term dynamics of 
\Cref{eq:ODE_intro} are effectively confined to 
$\manifold$, making it a natural candidate for reduced-order modeling.

\subsection{Local Coordinates and Flow Near $\manifold$}
To study the dynamics near a \textit{compact} slow spectral submanifold $\manifold$, we construct local coordinate systems using standard manifold theory \cite{wiggins1994normally, fenichel1974asymptotic}.  Because our analysis is local and focused on behavior close to 
$\manifold$, we now introduce a suitable neighborhood around it. Specifically, we define the tubular neighborhood of $\manifold$, denoted by $\tubular$, by first considering the normal $\varepsilon$-neighborhood
\begin{align} \label{eq:bundledefn}
    \normalbundleeps = \left\{ (\pp, \vv) : \pp \in \manifold,  \vv \in \E^s_\pp,  \norm{\vv} < \varepsilon \right\}
\end{align}
and applying the map $\mathbf{h}: \mathbf{N}^{s,\varepsilon} \rightarrow \mathbb{R}^{n_f},$ given by $\mathbf{h}(\pp, \vv) = \pp + \vv.$
Thus, we have $\tubular = \mathbf{h}(\normalbundleeps).$ 
We can think of this as an inflated region around $\manifold$ which extends outward in directions transverse to its surface.

To establish the local coordinate systems along $\manifold$, we cover it with a finite collection of open sets ${U_i}$. Each $U_i$ is equipped with:
\begin{enumerate}[label=(\roman*)]
    \item a coordinate chart $\chart_i: U_i \rightarrow \R^\nrstate $ that assigns reduced coordinates $\rstate = \chart_i(\mathbf{\pp})$ to each point $\pp \in U_i \subset \manifold$,
    \item a normal coordinate map $\normalchart_i: \normalbundleeps\mid_{U_i} \rightarrow \R^{\nfstate - \nrstate}$, assigning transverse coordinates $\nstate = \normalchart_i(\pp, \vv)$.
\end{enumerate}
With an appropriate atlas of $\manifold$, \ie $\bigcup_{i}(U_i, \chart_i)$, then, for a sufficiently small $\varepsilon$ in Equation \eqref{eq:bundledefn}, $\mathbf{h}$ is guaranteed to be a $C^{r-1}$-diffeomorphism.

The combined coordinate map $ (\chart_i \times \normalchart_i)(\pp, \vv) \defeq (\chart_i(\pp), \normalchart_i(\pp, \vv))$, where $(\chart_i \times \normalchart_i): \normalbundleeps \mid_{U_i} \rightarrow \R^\nrstate \times \R^{\nfstate - \nrstate}$, provides a local representation near $\manifold$. For clarity, we omit the index $i$ and denote these maps as $\chart$ and $\normalchart$.
We now define a global change of basis map across the different local coordinate systems:
\begin{equation}
    \wmap(\rstate, \nstate) \defeq \mathbf{h} \circ (\chart \times \normalchart)^{-1}(\rstate, \nstate).
\end{equation}
Intuitively, the map $\wmap$ tells us how to reconstruct a point in the full ambient space (of dimension~$\nfstate$) given its local reduced coordinates. This map is a $C^{r-1}$-diffeomorphism by construction and hence has inverse $\wmap^{-1}$.\footnote{\label{fn:changebasissingle}For example, $\fstate = \wmap(\rstate, \nstate) = \mathbf{W} \begin{bmatrix}
    \rstate^\top & \nstate^\top
\end{bmatrix}^\top$ where $\mathbf{W} \in \R^{\nfstate \times \nfstate}$ is a change of basis matrix for a single coordinate system.}
In the following, we show that this change of basis function is bounded above and below.
\begin{lemma} \label{lem:wmap_lipschitz}
    Suppose $\fstate$ and $\fstate'$ are in the tubular neighborhood of $\manifold$, \ie $\fstate, \fstate' \in \tubular$, then $\wmap$ is a bi-Lipschitz map which satisfies
\begin{equation}
\small
\begin{aligned}
\sigma_{\text{min}}\big(\jac \wmap\big)\norm{
\begin{bmatrix}
    \rstate - \rstate' \\
    \nstate - \nstate'
\end{bmatrix}
}
&\leq \norm{\wmap(\rstate, \nstate) - \wmap(\rstate', \nstate')} \\
&\leq \sigma_{\text{max}}\big(\jac \wmap\big)\norm{
\begin{bmatrix}
    \rstate - \rstate' \\
    \nstate - \nstate'
\end{bmatrix}
} ,
\end{aligned}
\end{equation}
where $\sigma_{\text{min}}\big(\jac \wmap\big)$ and $\sigma_{\text{max}}\big(\jac \wmap\big)$ are the minimum and maximum singular values of the Jacobian of $\wmap$, respectively.
\end{lemma}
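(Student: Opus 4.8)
The plan is to obtain both inequalities from the mean value inequality, applied to $\wmap$ for the upper bound and to its inverse $\wmap^{-1}$ for the lower bound, using that $\wmap$ is a $C^{r-1}$-diffeomorphism (so both $\wmap$ and $\wmap^{-1}$ are continuously differentiable). Write $\poly \defeq \begin{bmatrix} \rstate^\top & \nstate^\top \end{bmatrix}^\top$ and $\poly' \defeq \begin{bmatrix} \rstate'^\top & \nstate'^\top \end{bmatrix}^\top$ for the stacked local coordinates, so that $\fstate = \wmap(\poly)$ and $\fstate' = \wmap(\poly')$. Since $\manifold$ is compact and $\varepsilon$ is finite, the closure of the coordinate domain of $\wmap$ is compact; $\jac \wmap$ is continuous and everywhere invertible there, so its singular values are continuous and strictly positive and attain uniform extrema. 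I therefore read $\sigma_{\text{min}}(\jac \wmap)$ and $\sigma_{\text{max}}(\jac \wmap)$ as the infimum and supremum of these singular values over the domain, both finite and positive.

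For the upper bound, I would join $\poly'$ to $\poly$ by the segment $s \mapsto \poly' + s(\poly - \poly')$ and write
\[
    \wmap(\poly) - \wmap(\poly') = \int_0^1 \jac \wmap\big(\poly' + s(\poly - \poly')\big)(\poly - \poly')\, ds.
\]
Taking norms, bounding the integrand using $\norm{\jac \wmap\, \bm{v}} \leq \sigma_{\text{max}}(\jac \wmap)\norm{\bm{v}}$, and pulling out the constant factor $\norm{\poly - \poly'}$ gives $\norm{\wmap(\poly) - \wmap(\poly')} \leq \sigma_{\text{max}}(\jac \wmap)\norm{\poly - \poly'}$.

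For the lower bound, I would run the identical argument on $\wmap^{-1}$. By the inverse function theorem, $\jac\big(\wmap^{-1}\big)(\fstate) = \big(\jac \wmap(\poly)\big)\inv$ at corresponding points, and the largest singular value of $\big(\jac \wmap\big)\inv$ equals $1/\sigma_{\text{min}}(\jac \wmap)$. Applying the upper-bound estimate to $\wmap^{-1}$ and using $\wmap^{-1}(\fstate) = \poly$, $\wmap^{-1}(\fstate') = \poly'$ then yields $\norm{\poly - \poly'} \leq \tfrac{1}{\sigma_{\text{min}}(\jac \wmap)}\norm{\wmap(\poly) - \wmap(\poly')}$, which rearranges to the claimed lower bound.

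The main obstacle will be the integration along a straight segment: the coordinate image of the tubular neighborhood $\tubular$ need not be convex, so the segment joining $\poly'$ and $\poly$ may exit the domain of $\wmap$. I would handle this by exploiting compactness of $\manifold$ and the freedom to shrink $\varepsilon$ and to choose the charts over convex coordinate patches, so that the relevant region is convex; alternatively, replacing Euclidean segments by minimizing geodesics and absorbing the bounded geodesic-to-Euclidean distortion (finite by compactness) into the constants. The same compactness is what guarantees the uniform singular-value extrema used throughout.
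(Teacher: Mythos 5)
Your proposal is correct and follows essentially the same route as the paper's proof: the mean value inequality applied to $\wmap$ for the upper bound and to $\wmap^{-1}$ (via the inverse function theorem, using $\sigma_{\text{max}}\big(\jac \wmap^{-1}\big) = 1/\sigma_{\text{min}}\big(\jac \wmap\big)$) for the lower bound, with compactness of $\tubular$ supplying uniform singular-value bounds. In fact you are more careful than the paper on one point: the paper simply asserts that the coordinate domain is a convex set on which the mean value theorem applies, whereas you explicitly flag the possible non-convexity of the coordinate image and propose fixes (convex patches, shrinking $\varepsilon$, or geodesic segments with bounded distortion).
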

\begin{proof}
    \iftoggle{ext}{See Appendix \ref{app:proof_wmap_lipschitz}.}{See appendix in the extended version.}
\end{proof}
In the single coordinate system case\footref{fn:changebasissingle}, the map is simply bounded by the singular values of the change of basis matrix.

For points close enough to the manifold, we can give local expressions of $\flow$ in the coordinates defined above:
\begin{equation} \label{eq:localflows}
\begin{aligned}
    \varphi_t(\rstate, \nstate) \defeq \chart \circ \mathbf{h}^{-1} \circ \flow \circ \wmap (\rstate, \nstate),
\end{aligned}
\end{equation}
where $\mathbf{h}^{-1}$ returns a tuple $(\pp, \vv)$ and $\chart$ is applied only to $\pp$.
Hence, $\varphi_t(\rstate,\nstate)$ represents how the reduced coordinates evolve under the flow.
        
\subsection{Foliations of a Slow SSM}
In this section, we discuss the basin of attraction of $\manifold$. It turns out that $\manifold$ is buttressed by stable manifolds which admit an internal structure called an \textit{invariant foliation} dictating the asymptotic behavior of trajectories onto $\manifold$. To describe this structure, let us consider the family of maps of the form:
\begin{align} \label{defn:fiber}
    \fiber(\cdot; \pp) : \normalchart(\pp, \E_\pp^s) \to \R^\nrstate, \quad  \fiber(\zero; \pp) = \chart(\pp).
\end{align}
These maps provide a systematic way of assigning reduced coordinates to points located transversely to $ \manifold $ along the stable directions. 

Using these maps, we introduce the \emph{stable fiber} associated with each base point $ \pp \in \manifold $ in local coordinates:
\begin{align}
    \mathcal{F}(\pp) \defeq \big\{ \big( \fiber(\nstate; \pp \big), \nstate) \> : \> \nstate \in \normalchart(\pp, \E_\pp^{s}) \big\}.
\end{align}

We say that $\qq$ is on the stable fiber of the base point $\pp$ when $\qq \in \mathcal{F}(\pp)$. The stable foliation, $\foliation{\manifold}$, is then given by
\begin{align}
    \foliation{\manifold} = \bigcup_{\pp \in \manifold} \mathcal{F}(\pp).
\end{align}
We now state the invariant foliation theorem \cite{wiggins1994normally,fenichel1974asymptotic} which outlines the properties of the stable structure around $\manifold$.
\begin{theorem}[Foliation of Stable Manifolds] \label{thm:foliations}
    Let $\manifold$ be a compact $C^r$-smooth slow SSM of system~\eqref{eq:ODE_intro}, such that $r \in \mathbb{N}^{+} \cup\{\infty, a\}$ where $a$ refers to analytic. Then there exists an $\nfstate$-parameter family of $(\nfstate-\nrstate)$-dimensional stable fibers $\foliation{\manifold}$ satisfying:
    \begin{enumerate}[label=(\roman*)]
        \item Each fiber $\mathcal{F}(\pp)$ is an $(\nfstate - \nrstate)$-dimensional, $C^\nsmooth$-smooth manifold.
        \item $\mathcal{F}(\pp)$ is tangent to $\E^s_\pp$ at $\pp$.
        \item \label{enum:limit} Suppose $\qq \in \mathcal{F}(\pp)$ and $\qq' \in \mathcal{F}(\pp')$ where $\pp \neq \pp'$, then
        \begin{equation} \label{eq:rate_foliation}
            \lim_{t \to \infty} \frac{\norm{\flow(\qq) - \flow(\pp)}}{\norm{\flow(\qq') - \flow(\pp)}} = 0 .
        \end{equation}
        \item $\mathcal{F}(\pp) \cap \mathcal{F}(\pp') = \emptyset$, unless $\pp = \pp'$.
        \item $\mathcal{F}(\pp)$ are $C^{r-1}$ with respect to its base point $\pp$.
    \end{enumerate}
\end{theorem}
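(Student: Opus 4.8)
The plan is to exploit that $\manifold$ is a compact uniform NAIM, so that the variational equation along any orbit contained in $\manifold$ inherits the invariant splitting $T_\pp(\R^{\nfstate}) = T_\pp\manifold \oplus \E_\pp^s$ guaranteed by Definition~\ref{defn:ssm}, with the transverse bundle $\E^s$ contracting at the fast rate $\lambda$ and the in-manifold directions contracting no faster than the slow rate $\mu < \lambda$. I would characterize the fiber $\mathcal{F}(\pp)$ as the set of nearby points whose forward orbit shadows that of $\pp$ at the fast rate, i.e. $\norm{\flow(\qq) - \flow(\pp)} \le C e^{-\lambda' t}$ for a fixed $\mu < \lambda' < \lambda$. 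In the local coordinates $(\rstate,\nstate)$ of Section~\ref{sec:structure}, this fiber is sought as a graph $\nstate \mapsto (\fiber(\nstate;\pp),\nstate)$ over the transverse coordinates, exactly as in \eqref{defn:fiber}, and I would recast the invariance/shadowing requirement as a Lyapunov--Perron integral equation for $\fiber$. Since the base point ranges over the $\nrstate$-dimensional $\manifold$ and each fiber contributes $\nfstate-\nrstate$ transverse parameters, the resulting family fills $\tubular$ and is the claimed $\nfstate$-parameter family.

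For existence (property (i)) I would work in the Banach space of continuous, exponentially weighted orbit increments decaying like $e^{-\lambda' t}$ and show that the Lyapunov--Perron operator --- which integrates the transverse contraction forward and the slow part of the dynamics backward via variation of constants --- is a contraction; the strict spectral gap $\lambda > \mu$ is precisely what makes the weight admissible and the operator contracting. Its unique fixed point gives $\fiber(\cdot;\pp)$ as a Lipschitz graph, hence an $(\nfstate-\nrstate)$-dimensional manifold with $\fiber(\zero;\pp)=\chart(\pp)$. Tangency (property (ii)) follows by differentiating the fixed-point equation at $\nstate=\zero$: the linear part of the graph is governed by the transverse block of the variational flow, which forces $\jac \fiber(\zero;\pp)=\zero$, so $\mathcal{F}(\pp)$ is tangent to $\E_\pp^s$. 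Disjointness (property (iv)) is a consequence of uniqueness: if $\qq$ lay on both $\mathcal{F}(\pp)$ and $\mathcal{F}(\pp')$, both base orbits would shadow $\flow(\qq)$ at rate $\lambda'$, forcing $\flow(\pp)$ and $\flow(\pp')$ to converge to each other faster than $\mu$, which is impossible for distinct in-manifold orbits unless $\pp=\pp'$.

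The asymptotic-rate property (iii) is where the spectral gap does the decisive work. For $\qq \in \mathcal{F}(\pp)$ the numerator $\norm{\flow(\qq)-\flow(\pp)}$ decays at the fast rate, bounded by $C e^{-\lambda' t}$ by construction. For $\qq' \in \mathcal{F}(\pp')$ with $\pp' \neq \pp$, the denominator $\norm{\flow(\qq')-\flow(\pp)}$ is, after subtracting the fast-decaying within-fiber gap $\norm{\flow(\qq')-\flow(\pp')}$, controlled from below by the in-manifold separation $\norm{\flow(\pp')-\flow(\pp)}$ of the two distinct base orbits, which by the strictness of the gap persists at the slow rate, i.e. is bounded below by $c\,e^{-\mu t}$. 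Dividing the two estimates yields a bound of order $e^{-(\lambda'-\mu)t}\to 0$, establishing \eqref{eq:rate_foliation}.

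I expect smoothness to be the main obstacle: properties (i) (the $C^r$ regularity of each fiber) and (v) (the $C^{r-1}$ dependence on the base point) do not follow from the contraction alone. I would obtain them via the Hirsch--Pugh--Shub fiber contraction theorem, differentiating the Lyapunov--Perron equation and verifying that the induced operators on the successive derivatives are again fiber contractions, each additional derivative being admissible precisely because of the rate ratios permitted by $\lambda>\mu$ (with a further gap margin needed for the base-point derivatives). The characteristic loss of one derivative when differentiating the fixed point with respect to the base point $\pp$ is what produces $C^{r-1}$ rather than $C^r$ in (v). Alternatively, and more economically, since $\manifold$ is a uniform NAIM whose hypotheses match Definition~\ref{defn:ssm} verbatim, one may simply invoke the stable-foliation results for normally attracting invariant manifolds in \cite{fenichel1974asymptotic, wiggins1994normally, eldering2013normally}.
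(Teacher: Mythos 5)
Your proposal is correct in outline but takes a genuinely different route from the paper. The paper's proof is three sentences of citation: by Definition~\ref{defn:ssm} a slow SSM is a uniform NAIM, existence of the stable foliation then follows from Theorem~1 of \cite{eldering2018global}, and properties (i)--(v) are the standard properties of stable foliations of normally hyperbolic invariant manifolds as stated in \cite{wiggins1994normally} --- in other words, exactly the ``more economical'' alternative you offer in your final sentence. Your main argument instead re-derives the foliation from first principles: the Lyapunov--Perron fixed-point construction in exponentially weighted spaces, tangency by differentiating the fixed-point equation, disjointness from uniqueness of the fixed point, property (iii) from the spectral gap, and smoothness via the Hirsch--Pugh--Shub fiber-contraction argument with the characteristic $C^{r-1}$ loss in base-point regularity. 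This is the machinery that underlies the very theorems the paper cites, so it is the right method; what it buys is a self-contained proof that also exposes the quantitative constants later needed in Lemma~\ref{lemma:flow_estimates}, at the cost of reproducing a substantial amount of textbook material. One caution if you were to carry the construction out in full: in your argument for property (iii) you bound the separation of the two distinct base orbits below by $c\,e^{-\mu t}$, but the uniform-attraction condition \eqref{eq:uniform_attraction} as stated in the paper gives only \emph{upper} bounds ($\leq C e^{-\mu t}$) on tangential contraction; the lower bound requires the backward-time (two-sided) tangential estimate that is part of the standard NAIM/NHIM spectral conditions, and plain Gronwall with the global Lipschitz constant $L$ of $\fode$ yields only $e^{-L t}$, which is too weak whenever $L$ exceeds your fast rate $\lambda'$. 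That hypothesis is available once $\manifold$ is identified as a NAIM in the sense of \cite{eldering2013normally}, but it must be invoked explicitly --- which is, in effect, what the paper's citation-based proof does.
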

\begin{proof}
    By definition, a slow SSM is a NAIM \cite{eldering2013normally}. We then invoke results from Theorem 1 in \cite{eldering2018global} that guarantee the existence of a stable foliation. Since a NAIM is a special case of a normally hyperbolic invariant manifold, the properties of the stable foliation as stated in \cite{wiggins1994normally} apply.
\end{proof}
\Cref{thm:foliations} states that the stable foliation has dimension equal to the co-dimension of $\manifold$ and that each fiber is tangent to the fast subspace. Furthermore, it states that trajectories starting on the same fiber converge to each other at the fastest rate, each fiber has a unique base point, and that the fibers vary smoothly along the manifold.
The geometry of the stable fiber is shown in Figure \ref{fig:foliation}. These stable fibers are the key to characterizing the optimal projection operator onto the slow SSM.

\section{Optimal Projection Operator}%
\label{sec:projection_operator} 
In this section, we show that the fibers implicitly characterize the optimal projection operator onto the slow SSM.
We first prove that projecting initial conditions onto their corresponding fiber's base point minimizes the integrated error.
% We then show the form of the fiber map, $f^s(\pp)$, up to first order.
A zeroth-order approximation of $f^s(\pp)$ allows us to derive a linear oblique projection onto $\manifold$, for which we propose a data-driven learning algorithm.

\subsection{Optimality of the Base Point} \label{sec:basepoint_optimality}
We now establish the main result demonstrating the optimality of fiber-aligned projections for initial conditions near the manifold. Specifically, we show in Theorem \ref{thm:optimal_ise} that, in an appropriate neighborhood, projecting any off-manifold initial condition $\qq$ onto its unique corresponding base point $\pp \in \manifold$ minimizes the long-term trajectory prediction error. The proof of this theorem requires intermediate convergence results provided by Corollary \ref{corr:asymptotic_rate_foliation} and explicit exponential convergence estimates from Lemma \ref{lemma:flow_estimates}.

Any off-manifold point $\qq$ in the neighborhood $\tubular$ belongs uniquely to a fiber $\mathcal{F}(\pp)$ anchored at a base point $\pp$. Trajectories initialized at $\qq$ converge faster to the trajectories starting from this base point $\pp$ than to trajectories from any other manifold point $\pp' \neq \pp$, as stated formally below:

\begin{corollary}
\label{corr:asymptotic_rate_foliation}
    Suppose $\qq \in \mathcal{F}(\pp)$ where $\pp \neq \pp'$ are distinct points on $\manifold$, then
        \begin{equation} \label{eq:basepoint_convergence}
            \lim_{t \to \infty} \frac{\norm{\flow(\qq) - \flow(\pp)}}{\norm{\flow(\qq) - \flow(\pp')}} = 0.
        \end{equation}
\end{corollary}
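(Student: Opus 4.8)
The plan is to reduce the claim to property~\ref{enum:limit} of \Cref{thm:foliations}, whose only formal difference from the desired statement is that its denominator measures the separation of a point $\qq'$ on the fiber $\mathcal{F}(\pp')$ from the base trajectory $\flow(\pp)$, rather than the separation of $\qq$ itself from $\flow(\pp')$. The crucial observation that bridges this gap is that every base point lies on its own fiber: since $\fiber(\zero;\pp') = \chart(\pp')$, the point $\pp'$ is the $\nstate=\zero$ element of $\mathcal{F}(\pp')$ under the coordinate identification $\mathbf{h}\circ(\chart\times\normalchart)^{-1}$, so $\pp'\in\mathcal{F}(\pp')$. I would therefore apply property~\ref{enum:limit} with the admissible choice $\qq'=\pp'$, which yields
\begin{equation*}
\lim_{t\to\infty}\frac{\norm{\flow(\qq)-\flow(\pp)}}{\norm{\flow(\pp')-\flow(\pp)}}=0. \tag{$\star$}
\end{equation*}
Note the denominator never vanishes in finite time: uniqueness of solutions (\Cref{assumption:f}) keeps $\flow(\pp)\neq\flow(\pp')$ whenever $\pp\neq\pp'$, so the ratio is well defined for all $t$.

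Next I would convert $(\star)$, which controls the numerator against the in-manifold separation $\norm{\flow(\pp)-\flow(\pp')}$, into a statement about the denominator $\norm{\flow(\qq)-\flow(\pp')}$ actually appearing in the corollary. A reverse triangle inequality gives
\begin{equation*}
\norm{\flow(\qq)-\flow(\pp')}\;\geq\;\norm{\flow(\pp)-\flow(\pp')}-\norm{\flow(\qq)-\flow(\pp)}.
\end{equation*}
Dividing through by $\norm{\flow(\pp)-\flow(\pp')}$ and invoking $(\star)$ shows the subtracted term is asymptotically negligible, so there is a time $T$ beyond which $\norm{\flow(\qq)-\flow(\pp')}\geq\tfrac12\norm{\flow(\pp)-\flow(\pp')}$. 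Substituting this lower bound into the corollary's ratio gives, for $t\geq T$,
\begin{equation*}
\frac{\norm{\flow(\qq)-\flow(\pp)}}{\norm{\flow(\qq)-\flow(\pp')}}\;\leq\;2\,\frac{\norm{\flow(\qq)-\flow(\pp)}}{\norm{\flow(\pp)-\flow(\pp')}},
\end{equation*}
and the right-hand side tends to $0$ by $(\star)$, completing the argument.

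The step I expect to require the most care is precisely this lower bound on $\norm{\flow(\qq)-\flow(\pp')}$, i.e.\ ruling out that the off-fiber trajectory approaches $\flow(\pp')$ as fast as it collapses onto its own base trajectory $\flow(\pp)$. The work is done entirely by $(\star)$: because the transverse collapse of $\qq$ onto $\flow(\pp)$ (the fast rate $\lambda$) dominates the slower in-manifold separation between $\flow(\pp)$ and $\flow(\pp')$ (governed by $\mu<\lambda$), the distance from $\flow(\qq)$ to $\flow(\pp')$ is eventually comparable to the in-manifold term $\norm{\flow(\pp)-\flow(\pp')}$. If one preferred a self-contained estimate bypassing property~\ref{enum:limit}, the explicit exponential bounds of \Cref{lemma:flow_estimates} together with the spectral gap $\lambda>\mu$ would furnish the same conclusion, but routing through $(\star)$ and the base-point-on-its-own-fiber observation keeps the proof short.
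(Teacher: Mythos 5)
Your proof is correct and follows essentially the same route as the paper's: both hinge on the observation that $\pp' \in \mathcal{F}(\pp')$ so that \Cref{thm:foliations}\ref{enum:limit} applies with $\qq' = \pp'$, and both then telescope the denominator $\flow(\qq) - \flow(\pp') = \bigl(\flow(\qq) - \flow(\pp)\bigr) + \bigl(\flow(\pp) - \flow(\pp')\bigr)$ and use the reverse triangle inequality to conclude. If anything, your write-up is the cleaner rendering of the same argument, since the paper's displayed chain of equalities contains typographical slips (a swapped numerator/denominator and a $\flow(\pp) - \flow(\pp)$ that should read $\flow(\pp) - \flow(\pp')$) that your version avoids.
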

\begin{proof}
    This follows from Theorem \ref{thm:foliations} \ref{enum:limit}; for  the proof, \iftoggle{ext}{see Appendix \ref{app:proof_foliation_rate}.}{see appendix in the extended version.}
\end{proof}
To guarantee that the convergence in~\eqref{eq:basepoint_convergence} is uniform, we introduce the following assumption\footnote{This assumption can, in principle, be rigorously justified by invoking continuity of the flow and compactness of the chosen neighborhood. We omit the detailed argument here for simplicity.}:
\begin{assumption}\label{assum:closeness} There exists a neighborhood ${\pp' : \norm{\pp - \pp'} < \gamma_1}$ of the base point $\pp$, with $\gamma_1 > 0$, such that for every $\epsilon > 0$, there is a finite time $T_1$, independent of $\pp'$, satisfying \begin{align} \label{eq:finite_asymptotic_estimate} \norm{\flow(\qq) - \flow(\pp)} < \epsilon \norm{\flow(\qq) - \flow(\pp')}, \end{align} for all $\pp'$ within this neighborhood and all $t > T_1$. \end{assumption}

The following lemma provides bounds on the convergence rate of trajectories from nearby manifold points.
\begin{lemma} \label{lemma:flow_estimates}
    Let \(\qq \in \mathcal{F}(\pp)\) be an off-manifold initial condition with corresponding base point \(\pp\). Then, there exists a constant \(\gamma_2 > 0\) such that for any \(\pp'\) satisfying \(\norm{\pp - \pp'} < \gamma_2\), the following exponential bound holds:
    \begin{equation}
    \norm{\varphi_t(\rstatebar, \zero) - \varphi_t(\rstate', \zero)} \leq K e^{-\mu t}\norm{\rstatebar - \rstate'},
    \end{equation}
where $\rstatebar = \chart(\pp), \rstate' = \chart(\pp')$, and \(\mu > 0\) and \(K > 0\) are positive constants independent of \(t\).
\end{lemma}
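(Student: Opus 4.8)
The plan is to reduce the claim to two ingredients already available: the tangential uniform-attraction estimate in \eqref{eq:uniform_attraction} and the bi-Lipschitz bounds of $\wmap$ from Lemma \ref{lem:wmap_lipschitz}. First I would record the bookkeeping. Since the normal coordinate is $\zero$, the points $\wmap(\rstatebar,\zero)=\pp$ and $\wmap(\rstate',\zero)=\pp'$ lie on $\manifold$, so by invariance (Definition \ref{defn:ssm}) their flows $\flow(\pp)$ and $\flow(\pp')$ stay on $\manifold$ for all $t\geq 0$. Unwinding the definition \eqref{eq:localflows} of $\varphi_t$, one checks that $\wmap\inv(\flow(\pp))=(\varphi_t(\rstatebar,\zero),\zero)$ and likewise for $\pp'$, so the quantity to be bounded is exactly the reduced-coordinate image under $\wmap\inv$ of the ambient gap $\flow(\pp)-\flow(\pp')$.

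The core step is an ambient exponential estimate. I would connect $\pp$ and $\pp'$ by a smooth curve $\gamma:[0,1]\to\manifold$ with $\gamma(0)=\pp'$, $\gamma(1)=\pp$, and write the gap as a path integral of the spatial flow Jacobian,
\[
\flow(\pp)-\flow(\pp') = \int_0^1 \jac\flow(\gamma(s))\,\gamma'(s)\,ds.
\]
Because $\gamma(s)\in\manifold$ and $\gamma'(s)\in T_{\gamma(s)}\manifold$ for every $s$, the tangential bound in \eqref{eq:uniform_attraction} applies pointwise along the curve with the same constants $C,\mu$, giving $\norm{\jac\flow(\gamma(s))\gamma'(s)}\leq C e^{-\mu t}\norm{\gamma'(s)}$ and hence $\norm{\flow(\pp)-\flow(\pp')}\leq C e^{-\mu t}L(\gamma)$ with $L(\gamma)=\int_0^1\norm{\gamma'(s)}\,ds$.

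Next I would control the curve length and pass to coordinates. Choosing $\gamma_2>0$ small and invoking compactness and $C^r$-smoothness of $\manifold$, for $\norm{\pp-\pp'}<\gamma_2$ the intrinsic distance is comparable to the ambient one, so $\gamma$ can be taken inside the tubular neighborhood with $L(\gamma)\leq c\,\norm{\pp-\pp'}$ for a uniform constant $c$. Finally, Lemma \ref{lem:wmap_lipschitz} converts both sides: its lower bound gives $\norm{\varphi_t(\rstatebar,\zero)-\varphi_t(\rstate',\zero)}=\norm{\wmap\inv(\flow(\pp))-\wmap\inv(\flow(\pp'))}\leq \norm{\flow(\pp)-\flow(\pp')}/\sigma_{\text{min}}(\jac\wmap)$, while its upper bound gives $\norm{\pp-\pp'}\leq \sigma_{\text{max}}(\jac\wmap)\norm{\rstatebar-\rstate'}$. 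Chaining these estimates yields the claim with rate $\mu$ and $K = C c\,\sigma_{\text{max}}(\jac\wmap)/\sigma_{\text{min}}(\jac\wmap)$, which is independent of $t$.

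I expect the main obstacle to be the length-comparison step: justifying a single uniform constant $c$ relating intrinsic curve length to ambient distance, valid for all base-point pairs within $\gamma_2$, and ensuring the connecting curve $\gamma$ remains inside $\manifold$ (and within the tubular neighborhood where $\wmap\inv$ is defined). This is precisely where compactness of $\manifold$ and finiteness of the atlas are essential. By contrast, the exponential-decay mechanism is an immediate consequence of the tangential rate in \eqref{eq:uniform_attraction}, and the remaining coordinate bookkeeping is routine given Lemma \ref{lem:wmap_lipschitz}.
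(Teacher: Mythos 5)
Your proof is correct, but it takes a genuinely different route from the paper's. The paper works entirely in reduced coordinates: it Taylor-expands $\varphi_t(\cdot,\zero)$ about $\rstate'$, uses $\jac_2\varphi_t(\rstate',\zero)=\zero$, bounds the first spatial derivative by $Ce^{-\mu t}$ via \Cref{defn:ssm}(iii), and controls the quadratic remainder through decay of the \emph{second} derivative of the reduced flow (Lemma C.1 in \cite{eldering2013normally}), with $\gamma_2$ chosen so that $\norm{\rstate'-\rstatebar}^2<\eta_1\norm{\rstate'-\rstatebar}$ lets the remainder be absorbed into the linear term, yielding $K=C+C_1\eta_1$. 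You instead work in ambient space: the exact identity $\flow(\pp)-\flow(\pp')=\int_0^1\jac\flow(\gamma(s))\,\gamma'(s)\,ds$ along a curve in $\manifold$, combined with the tangential estimate in \eqref{eq:uniform_attraction}, gives the exponential bound with no remainder term at all, and \Cref{lem:wmap_lipschitz} converts between ambient and reduced coordinates on both ends. Your route buys two things: it avoids any appeal to higher-derivative decay estimates (the external lemma), and it makes explicit the coordinate conversion that the paper leaves implicit when it applies the ambient bound \eqref{eq:uniform_attraction} directly to $\jac_1\varphi_t$ — that silent step is exactly where your factor $\sigma_{\text{max}}(\jac\wmap)/\sigma_{\text{min}}(\jac\wmap)$ comes from. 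The cost is that your $\gamma_2$ must deliver the intrinsic-versus-chordal length comparison $L(\gamma)\le c\,\norm{\pp-\pp'}$, a standard but nontrivial fact for compact embedded $C^r$ submanifolds, which you correctly flag as the step where compactness is essential; the paper's $\gamma_2$ plays the more elementary role of merely making $\norm{\rstate'-\rstatebar}$ small. Both arguments produce the same rate $\mu$ and a $t$-independent constant $K$, so your proposal is a valid, and in some respects cleaner, alternative proof.
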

\begin{proof}
    \iftoggle{ext}{See Appendix \ref{app:proof_flow_estimates}.}{See appendix in the extended version.}
\end{proof}

We now show that a fiberwise projection to the appropiate base point is optimal in the sense that it minimizes long-term integrated error. 

\begin{theorem} \label{thm:optimal_ise}
Consider a nonlinear dynamical system defined by \eqref{eq:ODE_intro} with flow $\flow$ near the slow SSM $\manifold$. Let $\qq \in \mathcal{F}(\pp) \setminus \manifold$ be an off-manifold initial condition with corresponding base point $\pp \in \manifold$. Then, there exists a neighborhood $\delta > 0$ around $\pp$ such that for all $\pp' \in \mathcal{M}$ satisfying $\|\pp'-\pp\| < \delta$, the base point $\pp$ is optimal in the sense that
\begin{equation} \label{eq:projection_minimizer}
    \pp = \argmin_{\pp'} \int_0^\infty \|\flow(\qq) - \flow(\pp')\| \, dt.
\end{equation}
\end{theorem}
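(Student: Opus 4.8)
The goal is to show $\int_0^\infty \norm{\flow(\qq)-\flow(\pp)}\,dt \le \int_0^\infty \norm{\flow(\qq)-\flow(\pp')}\,dt$ for every $\pp'\in\manifold$ with $\norm{\pp-\pp'}<\delta$, the radius $\delta$ to be fixed at the end. I would abbreviate the base-point error by $a(t)\defeq\flow(\qq)-\flow(\pp)$ and the on-manifold gap by $d(t)\defeq\norm{\flow(\pp)-\flow(\pp')}$, so that $\flow(\qq)-\flow(\pp')=a(t)+\big(\flow(\pp)-\flow(\pp')\big)$. The plan is to split $\int_0^\infty=\int_0^T+\int_T^\infty$ at a finite horizon $T$, controlling the asymptotic tail with the uniform foliation estimate of \Cref{assum:closeness} and the transient head with the exponential on-manifold contraction from \Cref{lemma:flow_estimates}.

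For the tail, I would fix $\epsilon\in(0,1)$ and take $T=T_1(\epsilon)$ as furnished by \Cref{assum:closeness}, so that $\norm{a(t)}<\epsilon\,\norm{\flow(\qq)-\flow(\pp')}$ uniformly in $\pp'$ for all $t>T$. Integrating, $\int_T^\infty\norm{a(t)}\,dt\le\epsilon\int_T^\infty\norm{\flow(\qq)-\flow(\pp')}\,dt$, so on the tail the base point strictly dominates any competitor; combined with the reverse triangle inequality $\norm{\flow(\qq)-\flow(\pp')}\ge\tfrac{1}{1+\epsilon}d(t)$, the resulting tail saving in favor of $\pp$ is bounded below by a positive multiple of $\int_T^\infty d(t)\,dt$.

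For the head, the only contribution that can favor $\pp'$ is controlled through the triangle inequality $\norm{a(t)}\le\norm{\flow(\qq)-\flow(\pp')}+d(t)$, so the head of the left-hand integral exceeds that of the right by at most $\int_0^T d(t)\,dt$. To bound $d(t)$ I would map the two on-manifold trajectories into local coordinates: since $\pp,\pp'\in\manifold$ their normal coordinates vanish, so \Cref{lem:wmap_lipschitz} gives $d(t)\le\sigma_{\text{max}}(\jac\wmap)\,\norm{\varphi_t(\rstatebar,\zero)-\varphi_t(\rstate',\zero)}$ with $\rstatebar=\chart(\pp)$ and $\rstate'=\chart(\pp')$, and \Cref{lemma:flow_estimates} then yields $d(t)\le\sigma_{\text{max}}(\jac\wmap)\,K e^{-\mu t}\norm{\rstatebar-\rstate'}$. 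Consequently both $\int_0^T d$ and $\int_T^\infty d$ are finite and scale linearly with $\norm{\pp-\pp'}$.

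The main obstacle is reconciling the two windows. Driving the tail estimate to be sharp forces $\epsilon\to0$ and hence $T\to\infty$, which simultaneously shrinks the guaranteed tail saving $\int_T^\infty d\,dt$ and enlarges the transient discrepancy $\int_0^T d\,dt$; because $d(t)\sim e^{-\mu t}$, the head integral in fact dominates the tail for large $T$. Since both quantities are of the same linear order in $\norm{\pp-\pp'}$, shrinking $\delta$ alone cannot tip the balance, so the coarse triangle-inequality bounds above do not by themselves deliver optimality. Closing the gap requires a finer, direction-sensitive treatment of the transient — essentially a first-order expansion of $\int_0^\infty\norm{\flow(\qq)-\flow(\pp')}\,dt$ about $\pp$ along directions in $T_{\pp}\manifold$ — in which the stable-fiber geometry and the slow/fast spectral gap are exploited more sharply than in the asymptotic argument. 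I expect this transient term to be the true heart of the proof.
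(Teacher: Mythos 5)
Your decomposition (split at $T_1$, Assumption~\ref{assum:closeness} for the tail, Lemma~\ref{lem:wmap_lipschitz} plus Lemma~\ref{lemma:flow_estimates} for the head) is exactly the paper's, and your head estimate $\int_0^{T_1}\big(\norm{\flow(\qq)-\flow(\pp)}-\norm{\flow(\qq)-\flow(\pp')}\big)\,dt \leq \int_0^{T_1} \norm{\flow(\pp)-\flow(\pp')}\,dt \leq \frac{\sigma_{\text{max}}(\jac\wmap)K}{\mu}\big(1-e^{-\mu T_1}\big)\norm{\rstatebar-\rstate'}$ matches the paper line for line. The genuine gap is in your tail bound. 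You lower-bound the tail saving by a multiple of $\int_{T_1}^\infty \norm{\flow(\pp)-\flow(\pp')}\,dt$, a quantity that is itself $O(\norm{\pp-\pp'})$; this is precisely what produces the impasse you describe (head and tail both linear in $\norm{\pp-\pp'}$, so shrinking $\delta$ cannot tip the balance). The paper instead anchors the tail with a constant that does \emph{not} shrink with $\delta$: it fixes a radius $\gamma_3$ before choosing $\delta$ and sets
\begin{equation*}
M \;=\; \inf_{\pp' :\, \norm{\pp-\pp'}<\gamma_3}\int_{T_1}^\infty \norm{\flow(\qq)-\flow(\pp')}\,dt \;>\;0 ,
\end{equation*}
which is positive because $\qq\in\mathcal{F}(\pp)\setminus\manifold$: the trajectory from $\qq$ stays off the invariant manifold for all finite time (and, at $\pp'=\pp$, trajectories from distinct initial conditions never merge in finite time), so the integrated gap to \emph{any} on-manifold trajectory has a uniform positive floor on a compact neighborhood. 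Assumption~\ref{assum:closeness} then gives a tail saving of at least $(1-\epsilon)\int_{T_1}^\infty\norm{\flow(\qq)-\flow(\pp')}\,dt \geq (1-\epsilon)M$, a $\delta$-independent constant, while the head cost is $O(\delta)$; choosing $\delta \leq \mu(1-\epsilon)M/\big(\sigma_{\text{max}}(\jac\wmap)K(1-e^{-\mu T_1})\big)$ closes the argument.

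Consequently, your final diagnosis --- that the coarse triangle-inequality estimates cannot deliver optimality and that a first-order, direction-sensitive expansion of the transient is the true heart of the proof --- is incorrect: no such refinement appears in, or is needed for, the paper's proof. The asymmetry you were missing is not in the transient but in the tail: the competitor's cost is bounded away from zero uniformly by the off-manifold initial condition $\qq$, whereas the base point's excess transient cost is controlled by $\norm{\pp-\pp'}$. Once you replace your bound $\int_{T_1}^\infty \norm{\flow(\qq)-\flow(\pp')}\,dt \gtrsim \int_{T_1}^\infty \norm{\flow(\pp)-\flow(\pp')}\,dt$ with $\int_{T_1}^\infty \norm{\flow(\qq)-\flow(\pp')}\,dt \geq M$, the estimates you already derived complete the proof.
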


\begin{proof}
Our goal is to verify that for any $\pp' \in \mathcal{M}, \; \pp' \neq \pp$ satisfying $\|\pp'-\pp\| < \delta$ we have
\begin{equation} \label{eq:maininequality}
\small
\begin{aligned}
\int_0^\infty \norm{\flow(\qq) - \flow(\pp)}dt  < \int_0^\infty \norm{\flow(\qq) - \flow(\pp')}dt.
\end{aligned}
\end{equation}

Fix $\epsilon \in (0,1)$ to obtain a $T_1 > 0$ from \eqref{eq:finite_asymptotic_estimate}. Note there exists a $\gamma_3>0$ ensuring
\begin{align*}
    M = \inf_{\pp' \in \{\hat{\pp} : \norm{\pp - \hat{\pp}} < \gamma_3 \}} \int_{T_1}^\infty \norm{ \flow(\qq) - \flow(\pp')} dt > 0.
\end{align*}
Choose $\delta < \min(\gamma_1, \gamma_2, \gamma_3)$ so that the above, \Cref{assum:closeness}, and \Cref{lemma:flow_estimates} all hold.

Integrating Equation \eqref{eq:finite_asymptotic_estimate} and splitting the integrals at $T_1$, we find that it suffices to show
\begin{equation} \label{eq:final_inequality}
\small
\begin{aligned}
    &\int_{0}^{T_1} \hspace{-0.5em}\Big(\norm{\flow(\qq) - \flow(\pp)} -  \norm{\flow(\qq) - \flow(\pp')}\Big)dt \\
    &\leq \int_{0}^{T_1} \hspace{-0.5em}\Big(\norm{\flow(\pp) - \flow(\pp')} \Big)dt \leq (1-\epsilon)M.
\end{aligned}
\end{equation}

Since $\pp, \pp' \in \manifold$, we have that $(\rstatebar, \zero) = (\chart \times \normalchart)(\pp)$ and $(\rstate', \zero) = (\chart \times \normalchart)(\pp')$. Thus, we can expand
\begin{equation}
\small
\begin{aligned}
    \norm{\flow(\pp) - \flow(\pp')} &= \norm{\wmap\big(\varphi_t(\rstatebar, \zero), \zero \big) - \wmap\big(\varphi_t(\rstate', \zero), \zero \big)} \\
    & \leq \sigma_{\text{max}}(\jac \wmap) \norm{\varphi_t(\rstatebar, \zero) - \varphi_t(\rstate', \zero)} \\
    & \leq \sigma_{\text{max}}(\jac \wmap) K e^{-\mu t}\norm{\rstatebar - \rstate'},
\end{aligned}
\end{equation}
where the first inequality is due to \Cref{lem:wmap_lipschitz} and the second is due to \Cref{lemma:flow_estimates}. Plugging this into \eqref{eq:final_inequality} and integrating over $[0,T_1]$ gives
\begin{equation}
\small
\begin{aligned}
     \frac{\sigma_{\text{max}}(\jac \wmap) K}{\mu}\left(1 - e^{-\mu T_1}\right) \norm{\rstatebar - \rstate'} \leq (1-\epsilon)M. 
\end{aligned}
\end{equation}
Since $\norm{\rstatebar - \rstate'} < \delta$, we can choose \(\delta\) small enough so that
\begin{align}
    \delta \leq \frac{\mu(1-\epsilon)M}{\sigma_{\text{max}}(\jac \wmap)K(1 - e^{-\mu T_1})},
\end{align}
ensures the desired inequality.
\end{proof}

The time $T_1$ in \eqref{eq:finite_asymptotic_estimate} represents the time it takes for transient dynamics to synchronize with the dynamics on the manifold. If this synchronization occurs rapidly ($T_1 \to 0$), the radius $\delta$ grows arbitrarily large, effectively making the projection optimal for the entire manifold. Conversely, if convergence to the manifold is slower (larger $T_1$), the neighborhood size $\delta$ shrinks accordingly. This occurs because the prolonged transient dynamics increase the finite-time error between trajectories starting from 
$\pp$ and nearby points $\pp'$, necessitating a smaller neighborhood to ensure the projection remains optimal. Essentially, restricting the neighborhood of optimality around $\pp$ ensures that the initial transient error within the integral never outweighs the long-term optimality of the projection.

\subsection{Form of Projection Operator} \label{sec:form_of_operator}
We aim to learn the chart and parameterization maps of $\manifold$ via
\begin{equation}
\begin{aligned}
    \vmap(\qq) &\defeq \chart \circ \mathbf{h}^{-1} (\qq), \\
    \parammap(\rstate) &\defeq \wmap(\rstate, \zero),
\end{aligned} \label{eq:chart_param}
\end{equation}
which in turn defines the following fiberwise projection map
\begin{equation} \label{eq:fiberwiseprojection}
    \Pi(\qq) \defeq \parammap \circ \vmap(\qq),
\end{equation}
such that $\Pi : \tubular \to \manifold$ assigns to each point $\qq$ the unique base point $\pp$ of the stable fiber, $\mathcal{F}(\pp)$.
The following proposition provides a local representation of the stable fibers as a graph over the fast transverse directions, $\mathbf{E}^s_\pp$ and shows an approximation of the projection.
\begin{proposition} \label{prop:projection}
    For a slow SSM, $\manifold \subset \R^{\nfstate}$ and base point $\pp = (\rstatebar, \zero)$, the stable fiber $\mathcal{F}(\pp)$ can be approximated  as
    \begin{equation}
    \begin{aligned}
        \mathcal{F}(\pp) = \Bigg\{ 
        \begin{bmatrix}
            \rstatebar + \Vconst \nstate \\ \nstate
        \end{bmatrix}
        \> : \> \nstate \in \R^{\nfstate - \nrstate}
        \Bigg\} ,
    \end{aligned}
    \end{equation}
    where $\Vconst = \jac_{\nstate} \fiber(\zero; \zero)$ is a constant $\R^{\nrstate \times (\nfstate - \nrstate)}$. The candidate projection operator $\Pi : \R^{\nfstate} \to \manifold$, mapping $\qq \in \mathcal{F}(\pp)$ to $\pp$, is
    \begin{equation} \label{eq:projection}
        \Pi = \Vpca \Vopt^\top,
    \end{equation}
    where $\Vpca = \begin{bmatrix}
        \eye_{\nrstate \times \nrstate} & \zero_{\nrstate \times (\nfstate - \nrstate)}
    \end{bmatrix}^\top$ and $\Vopt = \begin{bmatrix}
        \eye_{\nrstate \times \nrstate} & -\Vconst
    \end{bmatrix}^\top$.
\end{proposition}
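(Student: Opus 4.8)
The plan is to obtain the graph representation of $\mathcal{F}(\pp)$ from a first-order Taylor expansion of the fiber map $\fiber(\cdot;\pp)$ in the transverse coordinate, and then to verify the closed form of $\Pi$ by elementary linear algebra once the fiber is written as a graph over $\E^s_\pp$. Concretely, I would start from the local description of the stable fiber, $\mathcal{F}(\pp) = \{(\fiber(\nstate;\pp),\nstate) : \nstate \in \normalchart(\pp,\E^s_\pp)\}$, which already exhibits the fiber as a graph over the transverse coordinates, and expand
\begin{equation}
\fiber(\nstate;\pp) = \fiber(\zero;\pp) + \jac_\nstate \fiber(\zero;\pp)\,\nstate + \bigO\big(\norm{\nstate}^2\big).
\end{equation}
The normalization $\fiber(\zero;\pp) = \chart(\pp) = \rstatebar$ from \eqref{defn:fiber} fixes the constant term, while tangency of $\mathcal{F}(\pp)$ to $\E^s_\pp$ at $\pp$ (\Cref{thm:foliations}) is consistent with identifying the linear coefficient as $\jac_\nstate\fiber$. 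The remaining ingredient is to replace the base-point-dependent Jacobian $\jac_\nstate \fiber(\zero;\pp)$ by its value at the origin, $\Vconst \defeq \jac_\nstate \fiber(\zero;\zero)$; this is the zeroth-order (in $\pp$) approximation, justified because $\mathcal{F}(\pp)$ varies $C^{r-1}$-smoothly with its base point (\Cref{thm:foliations}). Discarding the higher-order remainder and extending the transverse domain to all of $\R^{\nfstate-\nrstate}$ then yields the claimed linear graph $\fiber(\nstate;\pp) \approx \rstatebar + \Vconst\nstate$.

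With the fiber in graph form, a generic point on it reads $\qq = \begin{bmatrix} \rstatebar + \Vconst\nstate \\ \nstate \end{bmatrix}$ in local coordinates, and I would then exhibit and verify the operator directly. Applying $\Vopt^\top = \begin{bmatrix}\eye & -\Vconst\end{bmatrix}$ collapses the transverse contribution, $\Vopt^\top\qq = (\rstatebar + \Vconst\nstate) - \Vconst\nstate = \rstatebar$, after which $\Pi\qq = \Vpca\Vopt^\top\qq = \Vpca\rstatebar = \begin{bmatrix}\rstatebar \\ \zero\end{bmatrix} = \pp$, so every point of $\mathcal{F}(\pp)$ is sent to the base point, as required. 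To confirm that $\Pi$ is genuinely a projector I would check $\Vopt^\top\Vpca = \begin{bmatrix}\eye & -\Vconst\end{bmatrix}\begin{bmatrix}\eye \\ \zero\end{bmatrix} = \eye$, whence $\Pi^2 = \Vpca(\Vopt^\top\Vpca)\Vopt^\top = \Vpca\Vopt^\top = \Pi$, and $\Pi$ restricts to the identity on $\manifold$ (where $\nstate = \zero$).

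I expect the main obstacle to be conceptual rather than computational, since the statement is an \emph{approximation}: the real content lies in justifying the two truncations — dropping the $\bigO(\norm{\nstate}^2)$ remainder in $\nstate$ and freezing the Jacobian at the origin base point — and in making precise that they are controlled within the tubular neighborhood $\tubular$. Both rest on the smoothness guarantees of \Cref{thm:foliations} (each fiber is $C^\nsmooth$-smooth and varies $C^{r-1}$ in $\pp$) together with compactness of $\manifold$, which furnishes uniform bounds on the remainders. Once the graph form is granted, the identification $\Pi = \Vpca\Vopt^\top$ is the elementary verification above.
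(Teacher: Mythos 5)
Your proposal is correct and follows essentially the same route as the paper: a Taylor expansion of $\fiber(\cdot;\pp)$ in $\nstate$ around $\zero$, followed by freezing the Jacobian at the origin (zeroth-order in the base point) and reading off $\Pi = \Vpca\Vopt^\top$. The only cosmetic difference is that the paper makes the base-point truncation explicit—expanding $\jac_\nstate\fiber(\zero;\pp)$ via the graph representation $\pp = (\rstatebar, \graphfunc(\rstatebar))$ and tangency at the origin to exhibit the dropped bilinear term $\Vlin(\rstatebar)\nstate$—whereas you justify the same step by appealing to the $C^{r-1}$ dependence of fibers on their base points, and you additionally verify idempotence, which the paper defers to its Theorem on $\Pi$ being a proper projection.
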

\begin{proof}
    The proof Taylor expands the fiber map and is detailed in 
    \iftoggle{ext}{Appendix \ref{app:proof_projection_operator}.}{the appendix of the extended version.}
\end{proof}
Intuitively, Proposition \ref{prop:projection} shows that nearby fibers point approximately in the same direction as the stable fiber at the equilibrium point.
Since, in practice, we work in ambient coordinates, $\Vpca$ and $\Vopt$ may take a different form and must be identified. The procedure for inferring these matrices from data is detailed in \Cref{sec:learn_obique}.
% To refine the current approximation to a higher order, we replace the constant $\Vconst$ with $\Vconst + \Vlin \rstate$, where $\rstate$ represents the reduced coordinate of the fiber base point. Although this introduces additional unknowns through $\Vlin$, it allows for a more accurate representation of the fiber's behavior near the equilibrium.
%
% In this work, we focus on a zeroth-order approximation of the fiber attached to the stable fixed-point at the origin, \ie
% \begin{equation} \label{eq:fiber_approx}
%     \rstate = \rstatebar + \Vconst \nstate,
% \end{equation}
% where, with slight abuse of notation, we denote the point $\qq = (\rstate, \nstate)$ as a point in the neighborhood of the manifold.\footnote{Using local coordinates on the manifold, where $(\rstate, \nstate) = (\chart \times \normalchart)(\qq)$.}

We now show that $\Pi$ with an appropriate $\Vconst$ is a proper projection.
\begin{theorem} \label{thm:projection}
    Suppose $\R^\nfstate = E \bigoplus \E^s_0$ is a direct-sum decomposition of an $\nrstate$-dimensional slow spectral subspace, $E$, and an $(\nfstate-\nrstate)$-dimensional fast subspace $\E^s_{\zero}$. Then $\Pi$ is a projection in the sense that it satisfies: (i) $\Pi^2 = \Pi$, (ii)  $\text{range}(\Pi) = E$, and (iii) $\text{ker}(\Pi) = \E^s_\zero$,
    where $\Vpca \in \R^{\R^{\nfstate \times \nrstate}}$ is a matrix whose columns span $E$ and $\Vopt^\top \in \R^{\nrstate \times \nfstate}$ is a map that annihilates elements in $\E^s_\zero$.
\end{theorem}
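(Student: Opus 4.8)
The plan is to reduce all three claims to a single biorthogonality (normalization) identity,
\begin{equation*}
\Vopt^\top \Vpca = \eye_{\nrstate},
\end{equation*}
which holds by construction: in the local coordinates of Proposition \ref{prop:projection} one checks directly that $\begin{bmatrix} \eye & -\Vconst \end{bmatrix}\begin{bmatrix} \eye \\ \zero \end{bmatrix} = \eye_{\nrstate}$, and in ambient coordinates this is precisely the normalization pairing the column space $E$ of $\Vpca$ with the annihilator $\Vopt^\top$ of $\E^s_\zero$. I would first record two auxiliary facts. Since the $\nrstate$ columns of $\Vpca$ span the $\nrstate$-dimensional subspace $E$, the matrix $\Vpca$ has full column rank and is therefore injective; and since $\Vopt^\top$ maps $\R^\nfstate$ into $\R^\nrstate$ yet satisfies $\Vopt^\top \Vpca = \eye_\nrstate$, it must have rank exactly $\nrstate$, i.e.\ full row rank.

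For idempotency (i), I would simply insert the identity into the product,
\begin{equation*}
\Pi^2 = \Vpca\big(\Vopt^\top \Vpca\big)\Vopt^\top = \Vpca\,\eye_\nrstate\,\Vopt^\top = \Vpca \Vopt^\top = \Pi.
\end{equation*}
For the range (ii), the inclusion $\text{range}(\Pi) \subseteq \text{range}(\Vpca) = E$ is immediate from the factorization; for the reverse inclusion I would take any $\vv \in E$, write $\vv = \Vpca \mathbf{c}$, and use biorthogonality to get $\Pi \vv = \Vpca(\Vopt^\top \Vpca)\mathbf{c} = \Vpca \mathbf{c} = \vv$, so every element of $E$ is fixed by $\Pi$ and hence lies in its range. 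Thus $\text{range}(\Pi) = E$.

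The kernel computation (iii) is the one step that genuinely uses the auxiliary rank facts. First, injectivity of $\Vpca$ gives $\text{ker}(\Pi) = \text{ker}(\Vopt^\top)$: if $\Pi \vv = \Vpca \Vopt^\top \vv = \zero$ then $\Vopt^\top \vv = \zero$, and conversely $\Vopt^\top \vv = \zero$ clearly forces $\Pi \vv = \zero$. Next, the hypothesis that $\Vopt^\top$ annihilates $\E^s_\zero$ gives $\E^s_\zero \subseteq \text{ker}(\Vopt^\top)$. Finally, because $\Vopt^\top$ has full row rank $\nrstate$, rank--nullity yields $\dim \text{ker}(\Vopt^\top) = \nfstate - \nrstate = \dim \E^s_\zero$; combining this dimension count with the inclusion forces $\text{ker}(\Vopt^\top) = \E^s_\zero$, and therefore $\text{ker}(\Pi) = \E^s_\zero$.

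The content is almost entirely routine once the normalization $\Vopt^\top \Vpca = \eye_\nrstate$ is in hand, so the only real obstacle is justifying that identity in the general ambient-coordinate setting rather than the explicit local form of Proposition \ref{prop:projection} — that is, verifying that the reconstructed pair $(\Vpca, \Vopt)$ is properly normalized so that it realizes the oblique projection \emph{along} $\E^s_\zero$, not merely \emph{some} projection onto $E$. Everything else (idempotency, the exact range, and crucially the exact kernel) then follows mechanically from that identity together with the full-rank properties it induces.
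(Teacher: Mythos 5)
Your linear algebra is sound, and for parts (i) and (ii) it matches the paper, which dismisses both in one line using $\Vopt^\top \Vpca = \eye_{\nrstate \times \nrstate}$ and ``by construction.'' The divergence --- and the genuine gap --- is in (iii). You treat ``$\Vopt^\top$ annihilates $\E^s_\zero$'' as a \emph{hypothesis}, after which the kernel claim follows mechanically from injectivity of $\Vpca$ and rank--nullity. But in the paper this is precisely the conclusion that carries the mathematical content: $\Pi$ is the specific operator of Proposition \ref{prop:projection}, with $\Vopt^\top = \begin{bmatrix}\eye_{\nrstate \times \nrstate} & -\Vconst\end{bmatrix}$, whose kernel is the graph $U = \{(\Vconst\nstate, \nstate) : \nstate \in \R^{\nfstate-\nrstate}\}$, and the burden of proof is to show that $\Vconst$ can be ``appropriately chosen'' so that this graph coincides with $\E^s_\zero$. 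The paper does this by writing $\mathbf{A} = \jac\fode(\zero)$ in block upper-triangular form adapted to $E$ and observing that $U$ is $\mathbf{A}$-invariant (as $\E^s_\zero$ must be, being a spectral subspace) iff $\Vconst$ solves the Sylvester equation $\mathbf{A}_{TT}\Vconst - \Vconst\mathbf{A}_{NN} = -\mathbf{A}_{TN}$, which admits a unique solution precisely because $\text{spect}(\mathbf{A}_{TT}) \cap \text{spect}(\mathbf{A}_{NN}) = \emptyset$ --- i.e., because of the slow/fast spectral gap. That existence-and-uniqueness step is entirely absent from your proposal.

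A telling symptom: your argument never uses the fact that $E$ is slow and $\E^s_\zero$ is fast, nor that either subspace is invariant under $\mathbf{A}$; it works verbatim for \emph{any} complementary pair of subspaces, which signals that the substantive content has been assumed rather than proved. Relatedly, your closing paragraph misplaces the obstacle: the normalization $\Vopt^\top\Vpca = \eye_{\nrstate}$ is immediate from the block form and is not where the work lies. What actually needs proof is that there exists a matrix $\Vconst$ making $\text{ker}(\Pi)$ equal to $\E^s_\zero$ --- equivalently, that $\E^s_\zero$ is realizable as a graph over the normal directions with a slope compatible with the structure of $\Pi$ --- and that is exactly the Sylvester-equation/spectral-gap argument your proposal omits.
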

\begin{proof}
    From the definition of $\Pi$ \eqref{eq:projection}, we see that (i) is satisfied since $\Vopt^\top \Vpca = \eye_{\nrstate \times \nrstate}$ and (ii) is satisfied by construction. To show (iii), notice that $(\Vconst \nstate, \nstate)$ is annihilated by $\Pi$ for any $\nstate$. Thus, we must show that $(\Vconst \nstate, \nstate) \in \E^s_\zero$ for appropriately chosen $\Vconst$.

    Let us define $\mathbf{A} = \jac \fode(\zero)$. In a suitable basis, the matrix $\mathbf{A}$ can be arranged into a block-diagonal form
    \begin{equation}
        \mathbf{A} =
        \begin{bmatrix}
            \mathbf{A}_{TT} & \mathbf{A}_{TN}\\
            \zero & \mathbf{A}_{NN}
        \end{bmatrix},
    \end{equation}
    where $\mathbf{A}_{TT} \in \R^{\nrstate \times \nrstate}$ acts on the tangent directions, $\mathbf{A}_{TN} \in \R^{\nrstate \times (\nfstate - \nrstate)}$ accounts for first-order coupling from normal to tangent direction, and  $\mathbf{A}_{NN} \in \R^{(\nfstate - \nrstate) \times (\nfstate - \nrstate)}$ acts on the normal direction. To show that $(\Vconst \nstate, \nstate) \in \E^s_\zero$, we need to show that the subspace
    \begin{align*}
        U = \{(\rstate, \nstate) : \rstate = \Vconst \nstate \}
    \end{align*}
    is $\mathbf{A}$-invariant, \ie $\mathbf{A}(U) \subset U$. For $U$ to be invariant under $\mathbf{A}$, there must exist some $\nstate'$ such that
    \begin{align}
        \begin{bmatrix}
            \mathbf{A}_{TT} \Vconst \nstate + \mathbf{A}_{TN} \nstate \\
            \mathbf{A}_{NN} \nstate
        \end{bmatrix}
        =
        \begin{bmatrix}
            \Vconst \nstate' \\
            \nstate'
        \end{bmatrix}
    \end{align}
    which implies we must find $\Vconst$ which solves the Sylvester equation
    \begin{align} \label{eq:sylvester}
        \mathbf{A}_{TT} \Vconst - \Vconst \mathbf{A}_{NN} = -\mathbf{A}_{TN}.
    \end{align}
    There exists a unique solution to \eqref{eq:sylvester} iff $\text{spect}(\mathbf{A}_{TT}) \cap \text{spect}(\mathbf{A}_{NN}) = \emptyset$ \cite{golub2013matrix}, which is guaranteed in our setting due to the spectral gap between $E$ and $\E^s_\zero$.
\end{proof}

The projection operator $\Pi$ is valid only near the fixed point, where its linear approximation captures transient behavior. In the following section we will learn this projection operator across the phase space of the system.

\subsection{Learning Oblique Projections from Data}\label{sec:learn_obique}
\begin{algorithm}[t]
\small
\caption{Data Curation and Preparation} \label{alg:datacuration}
\begin{algorithmic}[1]
\Require \( \bm{\mathcal{Y}}, T_1 \)

\State Compute \(\dot{\bm{\mathcal{Y}}}\) using finite differencing
\State Partition \(\bm{\mathcal{Y}}, \dot{\bm{\mathcal{Y}}}\) at \(T_1\) into transient (\(\bm{\mathcal{Y}}_{\text{trans}}, \dot{\bm{\mathcal{Y}}}_{\text{trans}}\)) and near-manifold (\(\bm{\mathcal{Y}}_{\text{near}}, \dot{\bm{\mathcal{Y}}}_{\text{near}}\)) subsets
\State Perform SVD on \(\bm{\mathcal{Y}}_{\text{near}}\) to obtain \(\Vpca \in \mathbb{R}^{p \times \nrstate}\)

\State \textbf{Return} \(\boldsymbol{\mathcal{Y}}_{\text{trans}}, \dot{\bm{\mathcal{Y}}}_{\text{trans}}, \bm{\mathcal{Y}}_{\text{near}}, \dot{\bm{\mathcal{Y}}}_{\text{near}}, \Vpca\)
\end{algorithmic}
\end{algorithm}

\begin{algorithm}[t]
\caption{Learning the Oblique Projection}
\begin{algorithmic}[1] \label{alg:learnoblique}
\small
\Require \(\bm{\mathcal{Y}}_{\text{trans}}, \dot{\bm{\mathcal{Y}}}_{\text{trans}}, \Vpca, n_r\)
\State Co-optimize for \(\Vopt\) and \(\mathbf{R}_{\text{trans}}\):
\begin{equation}\label{eq:learn_trans}
\begin{aligned}
  \min_{\Vopt, \mathbf{R}_{\text{trans}}} & \quad \left\| \Vopt^{\top} \boldsymbol{\mathcal{\dot{Y}}}_{\text{trans}} - \mathbf{R}_{\text{trans}} \left( \Vopt^{\top} \boldsymbol{\mathcal{Y}}_{\text{trans}} \right)^{1:n_r} \right \|^2 \\
  \text{s.t.} & \quad \Vopt^{\top} \Vpca = \mathbf{I} ,
\end{aligned}
\end{equation}
\State Discard \(\mathbf{R}_{\text{trans}}\) as it reflects transient dynamics

\State \textbf{Return} \(\mathbf{V}_{\text{opt}}\)
\end{algorithmic}
\end{algorithm}

\begin{algorithm}[t]
\small
\caption{Learning Reduced Dynamics, Parameterization \\ Map, and Control Dynamics}
\begin{algorithmic}[1] \label{alg:learncontroldyn}
\Require \(\bm{\mathcal{Y}}_{\text{near}}, \dot{\bm{\mathcal{Y}}}_{\text{near}}, \bm{\mathcal{Y}}_u, \dot{\bm{\mathcal{Y}}}_u, \mathbf{U}, \Vopt, \Vpca, n_r, n_w\)
\State Solve the following optimization problems consecutively:
\begin{minipage}{\linewidth}
\begin{equation}
\begin{aligned} \label{eq:onmanifoldyns}
\min_{\mathbf{R}} &\left\| \Vopt^\top \dot{\bm{\mathcal{Y}}}_{\text{near}} - \mathbf{R} (\mathbf{V}_{\text{opt}}^\top \bm{\mathcal{Y}}_{\text{near}})^{1:n_r} \right\|^2 \\
\min_{\mathbf{W}_{\text{nl}}} &\left\| \bm{\mathcal{Y}}_{\text{near}} - \Vpca (\Vopt^\top \bm{\mathcal{Y}}_{\text{near}}) - \mathbf{W}_{\text{nl}} (\Vopt^\top \bm{\mathcal{Y}}_{\text{near}})^{2:n_w} \right\|^2 \\
\text{s.t.}& \quad \Vopt^\top \mathbf{W}_{\text{nl}} = \mathbf{0} \\
\min_{\mathbf{B}_r} &\left\| \Vopt^\top \dot{\bm{\mathcal{Y}}}_u - \mathbf{R} (\Vopt^\top \bm{\mathcal{Y}}_u)^{1:n_r} - \mathbf{B}_r \mathbf{U} \right\|^2
\end{aligned}
\end{equation}
\end{minipage}
\State \textbf{Return} \(\mathbf{R}, \mathbf{W}_{\text{nl}}, \mathbf{B}_r\)
\end{algorithmic}
\end{algorithm}

For real-world systems, we must learn the oblique projection, $\Pi$, from observational data.
Specifically, we fit the chart and parameterization maps as defined in \eqref{eq:chart_param} and the on-manifold dynamics, $\mathbf{r} : \R^n \to \R^n$.
The chart and parameterization maps must satisfy the invertibility relations \cite{haller2016nonlinear}, $\obs = (\parammap \circ \vmap) (\obs)$ and $\rstate = (\vmap \circ \parammap)(\rstate)$, where we replaced $\fstate$ with the observed states, $\obs \in \R^p$.
The exact form satisfying these relations we choose is
\begin{equation}
\begin{aligned}
    \rstate &= \vmap(\obs) \defeq \Vopt^\top \obs, \\
    \obs &= \parammap(\rstate) \defeq \Vpca \rstate + \mathbf{W}_{\text{nl}} \rstate^{2:n_w} ,
\end{aligned}\label{eq:chart_param_poly}
\end{equation}
since $\Vopt^\top \Vpca = \eye$ and we will enforce $\Vopt^\top \mathbf{W}_{\text{nl}} = \mathbf{0}$.
We use $\rstate^{2:n_w}$ to denote the family of all monomials from order 2 to $n_w$.
The dynamics on $\manifold$ takes a similar polynomial form
\begin{equation}\label{eq:reduced_ode}
    \dot{\mathbf{x}}_r = \mathbf{r} (\rstate) \defeq \mathbf{R} \rstate^{1:n_r} ,
\end{equation}
with $n_r$ being the desired order of the Taylor expansion.

To guarantee the embedding of an $n$-dimensional SSM, we require \(p \geq 2n+1\), which is commonly satisfied by enlarging the observable space via delay-embedded trajectories (see \cite{cenedese2022data-driven}). Differential embeddings are locally observable \cite{letellier2005relation}, which implies that delay embeddings are also likely locally observable (since the latter is a discrete version of the former). Thus, if the robot's sensors are appropriately placed to detect the dominant modes \cite{axas2023modelreduction} and the robot's controlled behaviors keep the system close to the SSM, the reduced state in \eqref{eq:reduced_ode} is approximately observable.

Observable data comprise of decaying trajectories, which we stack to form a dataset \(\bm{\mathcal{Y}} \in \mathbb{R}^{p \times N}\), where \(N\) is the trajectory length and $p$ is the dimension of the observable space.
We partition this dataset into transient and near-manifold components:
\(\bm{\mathcal{Y}}_{\text{trans}} \in \mathbb{R}^{p \times N_{T_1}}\) contains the first $N_{T_1}$ time steps corresponding to the transient dynamics, and \(\bm{\mathcal{Y}}_{\text{near}} \in \mathbb{R}^{p \times (N-N_{T_1})}\) captures the remaining near-manifold data.
Earlier methods, for example \cite{cenedese2022data-driven}, discarded \(\bm{\mathcal{Y}}_{\text{trans}}\), and computed the chart map, reduced  dynamics and SSM parametrization in the observable coordinates using \(\bm{\mathcal{Y}}_{\text{near}}\). \Cref{alg:datacuration} recalls these steps and obtains $\Vpca$, whose column vectors span the tangent space of the SSM.
Setting $\mathbf{V}_{\text{opt}} = \Vpca$ recovers an orthogonal projection onto the SSM.

From analysis in Section \ref{sec:basepoint_optimality}, we know \(\bm{\mathcal{Y}}_{\text{trans}}\) contains information of local fiberwise directions $\mathbf{V}_\text{opt}$.
\Cref{alg:learnoblique} therefore optimizes jointly over the dynamics in \eqref{eq:reduced_ode} and the projection direction to find an optimal $\mathbf{V}_{\text{opt}}$. 
Note that the constraint in \eqref{eq:learn_trans} enforces that the projection operator is idempotent (recall \Cref{thm:projection}(i)).
We employ projected gradient descent or IPOPT \cite{wachter2006implementation} to solve this non-convex optimization problem. 
Finally, we use the optimized oblique projection $\mathbf{V}_{\text{opt}}$ and near-manifold data to learn the coefficient matrices of the reduced dynamics and the parameterization map, $\mathbf{R}$ and $\mathbf{W}_{\text{nl}}$, as detailed in \Cref{alg:learncontroldyn}.
Indeed, the invertibility relations are satisfied as a result of both the objective function and the constraint in the optimization of $\mathbf{W}_{\text{nl}}$.

It is important to note that learning $\Pi$ from data far from the fixed point minimizes average error by fitting prevalent transients. While recovering the true nonlinear fibers would exactly minimize \eqref{eq:projection_minimizer}, this typically requires prohibitive amounts of data. Our method therefore strikes a balance between accuracy and practicality.

\section{Data-Driven Control via SSMs}\label{sec:numerical}
In Section~\ref{sec:learn_obique}, we described a method to learn optimal oblique projections for SSM-reduced models using data from uncontrolled systems. To apply this to controlled dynamics, we augment the autonomous reduced dynamics with a linear control contribution, \ie with slight abuse of notation,
\begin{equation*}
    \dot{\mathbf{x}}_r =  \mathbf{r} (\rstate) + \mathbf{B}_r \ctrl .
\end{equation*}

We apply control inputs to excite the system and collect the resulting observables, yielding the stacked matrices $\mathbf{U} \in \R^{m \times N^\prime}$ and $\boldsymbol{\mathcal{Y}}_u \in \R^{p \times N^\prime}$.
Using the optimization in \Cref{alg:learncontroldyn}, we compute the control matrix $\mathbf{B}_r$. This calibrated model enables closed-loop control with Model Predictive Control (MPC). Unlike standard MPC, our approach uses oblique projections of controlled data with $\Vopt$. For details on the optimal control problem formulation, we refer readers to 
% \cite{alora2025discovering, AloraCenedeseEtAl2023} and
\iftoggle{ext}{\Cref{app:mpc_formulation}.}{the appendix in the extended version.}
We now demonstrate the effectiveness of our approach on a benchmark problem and closed-loop control of a continuum trunk robot.
\footnote{All necessary code to reproduce the results can be accessed at \url{https://github.com/StanfordASL/Opt-SSM}.}

\begin{figure}[t]
    \centering
    \includegraphics[height=1.65in]{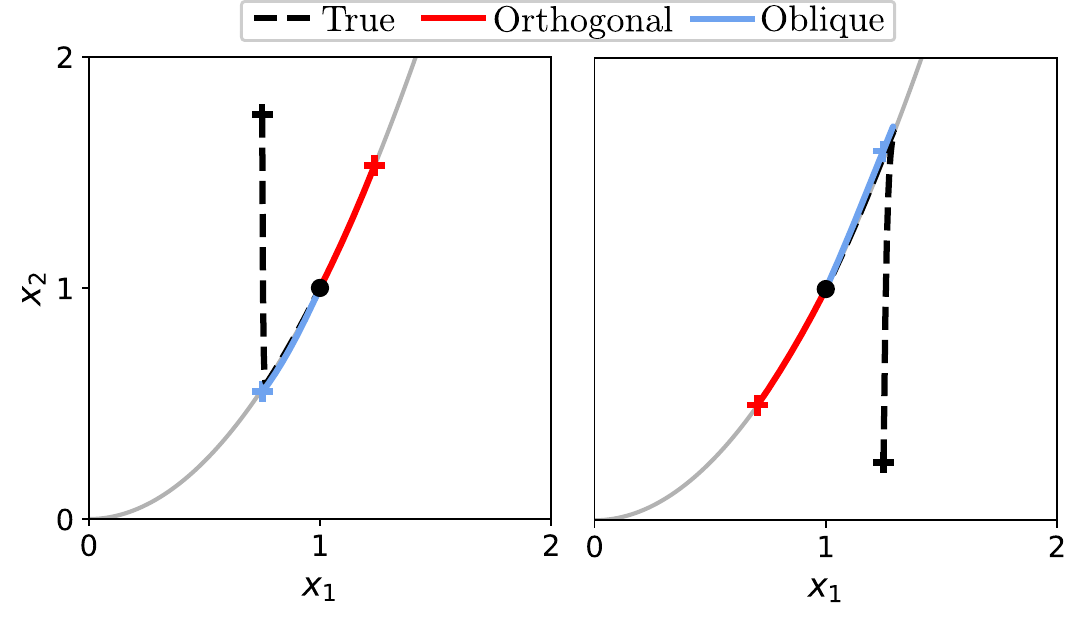}
    \caption{Comparison of predicted trajectories for SSMs with orthogonal projections and optimal oblique projections. An approximation of the critical manifold is shown in gray. The trajectories originate from \fplus ~and, after projection, converge to the stable fixed point at (1,1).}\label{fig:slow_fast_comparison}
\end{figure}

\subsection{Slow-Fast System}
To illustrate the importance of projection direction, we consider a two-dimensional system with large timescale separation, adopted from \cite{otto2023learning}.
We add forcing, $\mathbf{u} = \begin{bmatrix} u_1 & u_2 \end{bmatrix}^\top$, such that the governing equations become
\begin{equation}\label{eq:simple_fast_slow_forced}
\begin{aligned}
\dot{x}_1 & = \lambda x_1\left(1-x_1^2\right) + \alpha u_1 , \\
\varepsilon \dot{x}_2 & = x_1^2-x_2 + \varepsilon \beta u_2 ,
\end{aligned}
\end{equation}
where $\lambda, \varepsilon>0$ and $\varepsilon^{-1} \gg \lambda$.
We set the parameters as $\lambda = \varepsilon = 0.1$ and $\alpha = \beta = 0.2$.
For the uncontrolled system, we collect 10 trajectories around the asymptotically stable fixed point $(1,1)$, and shift the data to be centered around the origin. In this example, we observe the full state space and reduce to a 1-dimensional system.

The chart map and reduced dynamics are found using \Cref{alg:learnoblique} and the parameterization map using \Cref{alg:learncontroldyn}.
In Figure \ref{fig:slow_fast_comparison} (left), we show how SSMs using orthogonal projections compare with those using our optimized oblique projections.
We initialize off the slow manifold and, for both methods, project onto it before propagating the reduced dynamics.
The orthogonal projection causes the predicted response to be on opposite side of the stable fixed point, resulting in large trajectory error.
The oblique projection, on the other hand, projects vertically along the fast dynamics, producing minimal trajectory error.
Additionally, we simulate the forced response with $\mathbf{u} = \begin{bmatrix}e^{-t} & e^{-t}\end{bmatrix}^\top$.
The results in Figure \ref{fig:slow_fast_comparison} (right) show again improved prediction accuracy.
% Again, orthogonally projecting leads to large errors, while the oblique projection leads to better initialization and captures the amplitude caused by the forcing accurately.

\begin{figure}[t]
    \centering
    \includegraphics[height=2.275in]{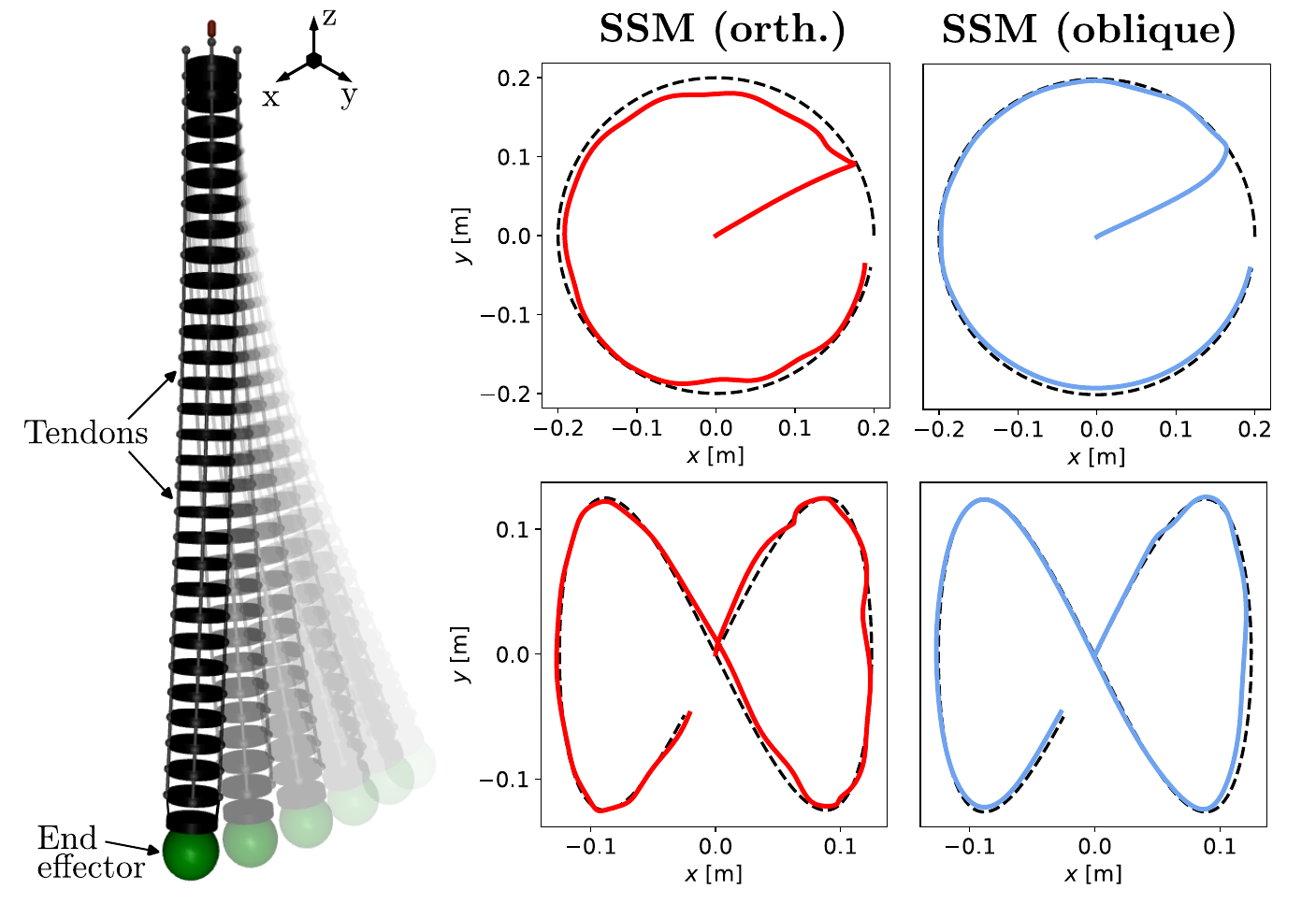}
    \caption{The simulated trunk robot (left) has a 30-link (180 dim.) body actuated by 12 antagonistic tendons ($\mathbf{u} \in \R^6$). We compare closed-loop tracking performance of the standard SSM scheme with our proposed oblique projection-based SSM.}\label{fig:trunk}
\end{figure}

\subsection{High-Dimensional Trunk Robot}
Finally, we showcase the effectiveness of our framework on a high-dimensional system.
We simulate the system, depicted in Figure \ref{fig:trunk}, using MuJoCo and perform closed-loop control using our MPC control strategy.

We collect 10 autonomous decay trajectories for fitting of the chart map, parameterization map and reduced dynamics.
Each trajectory is $10s$ in duration, and observations are stored at a time discretization of $dt=0.01s$.
We concatenate the observations with 3 time-delays and learn a 5D autonomous SSM model, with $n_r = n_w = 2$.
These parameter values align with the findings for similar SSM-reduced models in soft trunk robot geometries (see \cite{kaundinya2025data-driven} for the justification).
For the orthogonal projection, we report a mean MSE of $9.16 \pm 5.76$ $cm^2$ across 10 test decay trajectories, while the oblique projection achieves $1.60 \pm 1.10$ $cm^2$.
An additional 20 controlled trajectories are collected for regression of the control matrix.
Specifically, we excite the system by harmonically varying the control inputs.

Two trajectory tracking tasks are constructed, a circle of radius $20cm$ and a figure eight of radius $12.5cm$.
The circle is chosen such that it lies on the border of the training data.
Both reference trajectories have a duration of $2.5s$ and are defined in the horizontal plane.
We solve the MPC formulation with a horizon of $N=8$, and execute the first two control inputs of each horizon.
As a closed-loop performance metric we use the Integrated Square Error (ISE), and for both tasks compare the performance of SSMs using orthogonal and oblique projections in Table \ref{tab:trunk_results}.
These results show that our optimal projection outperforms orthogonal projection, which we attribute to the improved predictive accuracy.

\section{Conclusion and Outlook}\label{sec:conclusion}
In this work, we have introduced optimal oblique spectral projections to enhance the accuracy of data-driven model reduction to slow SSMs.
By utilizing the stable fiber structure surrounding slow SSMs, we developed a linear approximation of these fibers that reduces long-term prediction errors in the system's dynamics.
Additionally, we have shown how to effectively learn these projections from data and use them to achieve notable improvements in closed-loop control performance.

Future work could extend our framework by learning nonlinear chart maps, $\Vopt(\rstate)$, in which case we expect to approximate the curvature of the fibers.
Moreover, we assume user-specified transient segmentation, $T_1$, which may yield suboptimal results if misspecified. Recent advances in inferring dynamical structures from data \cite{mcinroe2024global} could help automate this partitioning.

\begin{table}[bt!]
    \centering
    \caption{Integrated Square Error (ISE) values (in $cm^2 s$) for both methods on two evaluation tasks.}
    \begin{tabular}{p{2.25cm}  p{1.75cm} p{1.75cm}}
        \toprule
        & SSM (orth.)  & SSM (oblique)  \\
        \midrule
         Circle (2D) & 4.70 & \textbf{0.94} \\ 
         Figure Eight (2D) & 0.87 & \textbf{0.55} \\ \bottomrule
    \end{tabular}
    \label{tab:trunk_results}
\end{table}

\printbibliography

\iftoggle{ext}{
% \textit{Acknowledgements}: The authors thank Rohan Sinha for his thoughtful review and Patrick B. Eberhard for assistance with the trunk simulator and visualization.

\appendix
\subsection{Proof of \Cref{lem:wmap_lipschitz}}\label{app:proof_wmap_lipschitz}
\begin{proof}
    Note that $\wmap^{-1}$ exists since $\wmap$ is a local diffeomorphism in $\tubular$. Since $\wmap^{-1}$ is smooth on $\tubular$ and $\tubular$ can be chosen small enough such that it is guaranteed to be a compact set (\eg by shrinking $\epsilon$), then the Jacobian $\jac \wmap^{-1}$ is bounded on $\tubular$. Thus,
    \begin{align*}
        \norm{\jac \wmap^{-1}(\fstate)} \leq \sigma_{\text{max}}(\jac \wmap^{-1}) = \frac{1}{\sigma_{\text{min}}(\jac \wmap)}.
    \end{align*}
    By the Mean Value Theorem, we have that
    \begin{align*}
        \norm{
\begin{bmatrix} \rstate - \rstate' \\ \nstate - \nstate' \end{bmatrix}
} \leq \frac{1}{\sigma_{\text{min}}(\jac \wmap)} \norm{\wmap(\rstate, \nstate) - \wmap(\rstate', \nstate')},
\end{align*}
which by rearranging, gives us the lower bound. Finally, since $\wmap$ is differentiable on a convex subset of $(\chart \times \normalchart)\big(\normalbundleeps\big)$, then we can apply the mean value theorem such that 
\begin{equation}
\begin{aligned}
    \norm{\wmap(\rstate, \nstate) - \wmap(\rstate', \nstate')} &\leq \norm{\jac\wmap} \norm{\begin{bmatrix} \rstate - \rstate' \\ \nstate - \nstate' \end{bmatrix}} \\
    &\leq \sigma_{\text{max}} (\jac \wmap) \norm{\begin{bmatrix} \rstate - \rstate' \\ \nstate - \nstate' \end{bmatrix}},
\end{aligned}
\end{equation}
which gives us the upper bound and completes the proof.
\end{proof}

\subsection{Proof of \Cref{corr:asymptotic_rate_foliation}}\label{app:proof_foliation_rate}
\begin{proof}
We begin by telescoping the denominator,
\begin{equation*}
\scriptsize
\begin{aligned}
\lim_{t \to \infty}\frac{\norm{\flow(\qq) - \flow(\pp)}}{\norm{\flow(\qq) - \flow(\pp')}} &= \lim_{t \to \infty}\frac{\norm{\flow(\qq) - \flow(\pp')}}{\norm{\flow(\qq) - \flow(\pp) + \flow(\pp) - \flow(\pp)}} \\
&\leq \lim_{t \to \infty} \frac{1}{\Big| \frac{\norm{\flow(\qq) - \flow(\pp)}}{\norm{\flow(\qq) - \flow(\pp)}} - \frac{\norm{\flow(\pp) - \flow(\pp')}}{\norm{\flow(\qq) - \flow(\pp')}}\Big|} \\
&= \lim_{t \to \infty} \frac{1}{\Big| 1 - \frac{\norm{\flow(\pp) - \flow(\pp')}}{\norm{\flow(\qq) - \flow(\pp)}}\Big|} ,
\end{aligned}
\end{equation*}
where the second inequality is due to the triangle inequality. By the invariant foliation theorem (\Cref{thm:foliations}(iii)) and since $\pp' \in \mathcal{F}(\pp')$, we have that the ratio $\frac{\norm{\flow(\pp') - \flow(\pp)}}{\norm{\flow(\qq) - \flow(\pp)}} \to \infty$ as $t \to \infty$. Hence,
\begin{align*}
    \lim_{t \to \infty} \frac{\norm{\flow(\qq) - \flow(\pp)}}{\norm{\flow(\qq) - \flow(\pp')}} \leq \lim_{t \to \infty} \frac{1}{\Big| 1 - \frac{\norm{\flow(\pp) -\flow(\pp')}}{\norm{\flow(\qq) - \flow(\pp)}}\Big|} = 0 .
\end{align*}
\end{proof}

\subsection{Proof of \Cref{lemma:flow_estimates}}\label{app:proof_flow_estimates}
\begin{proof} Since $\qq \in \mathcal{F}(\pp)$, we can denote the reduced coordinates as a function of the normal coordinates, \ie $\rstatebar = \chart(\pp), \rstate' = \chart(\pp')$.
The constant $\gamma_2 > 0$ is chosen such that, for any $\pp'$ satisfying $\norm{\pp - \pp'} < \gamma_2$, we have that $\norm{\rstate' - \rstatebar}^2 < \eta_1 \norm{\rstate' - \rstatebar}$ for an appropriately chosen $\eta_1 > 0$.

We derive the upper bound on the flow in the reduced coordinates by Taylor expanding $\varphi_t$ at $\rstate'$,
\begin{equation}
\small
\begin{aligned}
    \varphi_t(\rstatebar, \zero) = \varphi_t(\rstate', \zero) &+ \jac_1 \varphi_t(\rstate', \zero) (\rstate' - \rstatebar) \\
    &+ \bigO(\norm{\rstate' - \rstatebar}^2),
\end{aligned}
\end{equation}
where we used the fact that $\jac_2 \varphi_t(\rstate', \zero) = \zero$.
Using this expansion, we construct the following bound
\begin{equation} \label{eq:lemmabound}
\small
\begin{aligned}
    &\norm{\varphi_t(\rstatebar, \zero) - \varphi_t(\rstate', \zero)} \\
    & \hspace{2em}= \norm{\big(\jac_1 \varphi_t(\rstate', \zero) (\rstate' - \rstatebar) + \bigO(\norm{\rstate' - \rstatebar}^2)} \\
    &\hspace{2em}\leq \norm{\jac_1 \varphi_t(\rstate', \zero)}\norm{\rstate' - \rstatebar} + C_1 e^{-\mu t} \norm{\rstate' - \rstatebar}^2 \\
    &\hspace{2em}\leq (C + C_1\eta_1)e^{-\mu t}\norm{\rstate' - \rstatebar},
\end{aligned}
\end{equation}
where the first inequality results from the fact that under Lemma C.1 in \cite{eldering2013normally}, if $\norm{\jac_1 \varphi_t} \leq Ce^{-\mu t}$, then there exists $\bar{C}$ such that $\norm{\jac^k_1 \varphi_t} \leq \bar{C}e^{-\mu t}$ for $k \leq r$ and the integral remainder form of a Taylor series \cite{marsden1993elementary} (Theorem 9, p. 195) is bounded by
\begin{equation}
\small
\begin{aligned}
    \bigO(\norm{\rstate' - \rstatebar}^2) &\leq \frac{1}{2} \norm{\jac^2_1 \varphi_t} \norm{\rstate' - \rstatebar}^2 \\
    &\leq \frac{1}{2} \bar{C} e^{-\mu t} \eta_1 \norm{\rstate' - \rstatebar}.
\end{aligned}
\end{equation}
In Equation \eqref{eq:lemmabound}, we let $C_1 = \frac{1}{2}\bar{C}$. Notice that we used \Cref{defn:ssm}(iii), and the fact that $\gamma_2$ is small enough to ensure $\norm{\rstate' - \rstatebar}^2 < \eta_1 \norm{\rstate' - \rstatebar}$. The lemma follows with $K=C + C_1\eta_1$. 
\end{proof}

\subsection{Proof of \Cref{prop:projection}}\label{app:proof_projection_operator}
\begin{proof}
    By \Cref{thm:foliations}, $\fiber(\cdot;\pp)$ is a $C^r$-smooth surface with the key property that its dependence on the base point $\pp$ is also $C^r$ smooth. 
    Expanding $\fiber$ in $\nstate$ around $\nstatebar = \zero$ for fixed $\pp$, we have
    \begin{equation} \label{eq:taylor_base}
    \small
    \begin{aligned}
        \rstate &= \fiber(\nstate; \pp) = \fiber(\zero; \pp) + \jac_{\nstate} \fiber(\zero; \pp) \nstate + \bigO(\norm{\nstate}^2) \\
        &= \rstatebar + \jac_{\nstate} \fiber(\zero; \pp)\nstate + \bigO(\norm{\nstate}^2) \\
        &= \rstatebar + \jac_{\nstate} \fiber \big(\zero; (\rstatebar, \mathbf{g}(\rstatebar)) \big)\nstate + \bigO(\norm{\nstate}^2),
    \end{aligned}
    \end{equation}
    where the first equality is due to the definition of a fiber \eqref{defn:fiber} and the second equality is due to the SSM being constructed as a graph, where $\pp = (\rstatebar, \graphfunc(\rstatebar))$.

    Expanding $\fiber(\cdot; (\rstatebar, \graphfunc(\rstatebar)))$ around $\rstatebar = \zero$ we have
    \begin{equation} \label{eq:taylor_fiber}
    \small
    \begin{aligned}
        \fiber \big(\nstate; (\rstatebar, \graphfunc(\rstatebar)) \big) =&~ \fiber \big(\nstate; (\zero, \zero) \big) \\&+ \jac_{\rstate} \fiber(\nstate; (\zero, \zero))\rstatebar \\&+ \bigO(\norm{\rstatebar}^2),
    \end{aligned}
    \end{equation}
    where we invoke tangency of the manifold at the origin, hence $\jac_{\rstate} \graphfunc(\zero) = \zero$ and $\graphfunc(\zero) = \zero$. Combining \eqref{eq:taylor_base} and \eqref{eq:taylor_fiber}, yields the following
    \begin{equation*}
    \begin{bmatrix}
            \rstatebar \\ \zero
    \end{bmatrix}
    =
    \begin{aligned}
        \begin{bmatrix}
            \rstate - \Vconst \nstate - \Vlin(\rstatebar) \nstate \\
            \zero
        \end{bmatrix},
    \end{aligned}
    \end{equation*}
    where $\Vlin(\rstatebar) = \jac_{\nstate} \jac_{\rstate} \fiber(\zero; \zero) \rstatebar$.
    %is a constant third order tensor in $\R^{\nrstate \times (\nfstate - \nrstate) \times \nrstate}$. 
    We take a zeroth-order approximation and define the candidate projection operator
    \begin{equation*}
    \begin{aligned}
        \Pi(\fstate) 
        =
        \begin{bmatrix}
            \rstatebar \\ \zero
        \end{bmatrix}
        =
        \underbrace{\begin{bmatrix}
            \eye_{\nrstate \times \nrstate} \\ \zero_{(\nfstate - \nrstate) \times \nrstate}
        \end{bmatrix}}_{\Vpca}
        \underbrace{\begin{bmatrix}
            \eye_{\nrstate \times \nrstate} & -\Vconst
        \end{bmatrix}}_{\Vopt^\top}
        \begin{bmatrix}
            \rstate \\ \nstate
        \end{bmatrix}
    \end{aligned}.
    \end{equation*}
    Note that $\Vpca$ and $\Vopt$ are not unique since we can always apply a similarity transform to these matrices. 
\end{proof}

\subsection{Reduced Order Model Predictive Control}\label{app:mpc_formulation}
This section describes how we implement trajectory tracking using MPC with our learned dynamics model. 
We formulate the MPC optimization problem as follows
\begin{equation}\label{eq:MPC}
    \begin{aligned}
    \underset{\mathbf{u}(\cdot)}{\operatorname{min}} \quad & \left\|\delta \mathbf{z}\left(t_f\right)\right\|_{\mathbf{Q}_f}^2+\int_{t_0}^{t_f}\left(\|\delta \mathbf{z}(t)\|_{\mathbf{Q}}^2+\|\mathbf{u}(t)\|_{\mathbf{R_u}}^2 \right. \\
    & + \left. \|\mathbf{u}(t)-\mathbf{u}(t-\Delta t)\|_{\mathbf{R}_{\Delta}}^2\right) dt \\
    \text{s.t.} \quad & \mathbf{x}_r (0) =\Vopt^\top \left(\mathbf{y}(0)-\mathbf{y}_{\mathrm{eq}}\right) \\
    & \dot{\mathbf{x}}_r(t) =\mathbf{R} \mathbf{x}_r (t)^{1:n_r} + \mathbf{B}_r \mathbf{u}(t) \\
    & \mathbf{z}(t) = \mathbf{C} \left( \Vpca \rstate (t) + \mathbf{W}_\mathrm{nl} \rstate (t)^{2:n_w} \right) + \mathbf{z}_{\mathrm{eq}}, \\
    & \mathbf{z}(t) \in \mathcal{Z}, \quad \mathbf{u}(t) \in \mathcal{U},
    \end{aligned}
\end{equation}
where $\mathbf{z} \in \R^o$, with $o \leq p$, is the performance state vector that we aim to control.
Equilibrium quantities are denoted by the subscript 'eq', as in $y_\mathrm{eq}$ and $z_\mathrm{eq}$.
Similar to \cite{AloraCenedeseEtAl2023}, we discretize the continuous-time dynamics and solve \eqref{eq:MPC} using GuSTO \cite{BonalliCauligiEtAl2019}.
Running this optimization is designed to be real-time feasible, as $n \ll n_f$.

}%

\end{document}